\theoremstyle{thmstyleone}%
\newtheorem{theorem}{Theorem}
\newtheorem{proposition}{Proposition}  
\newtheorem{lemma}{Lemma}
\theoremstyle{thmstyletwo}%
\newtheorem{remark}{Remark}
\theoremstyle{thmstylethree}%
\newtheorem{definition}{Definition}
\begin{document}

\title[Invariant measure of deviation from Petrov type D]{An invariant
  measure of deviation from Petrov type D at the level of initial
  data}


\author*[1]{\fnm{Edgar} \sur{Gasper\'in}}
\author[2]{\fnm{Jarrod L.} \sur{Williams}}

\affil*[1]{\orgname{Instituto de Ciencias Nucleares, Universidad
    Nacional Aut\'onoma de M\'exico}, \orgaddress{\city{Cd. Mx.},
    \postcode{04510}, \country{M\'exico}}}

\affil[2]{\orgname{Independent Researcher}, \orgaddress{\city{London},
    \country{UK}}}

\begingroup
\renewcommand\thefootnote{}
\footnotetext{*Email: e.gasperin@nucleares.unam.mx}
\footnotetext{\phantom{*}Email: jrrodwilliams@gmail.com}
\endgroup

\abstract{ In this article we describe a simple covariant
  characterisation of initial data sets which give rise to Petrov type
  D vacuum spacetime developments. As an application, we derive an
  integral invariant which, when restricted to the appropriate class
  of asymptotically Euclidean initial data sets, vanishes if and only
  if the initial dataset is isometric to initial data for the Kerr
  spacetime. As such, the invariant can be considered a measure of
  \emph{non-Kerrness} on such initial data sets. In contrast with
  other similar invariants constructed through the notion of
  ``approximate Killing spinors'', the present invariant is
  \emph{algebraic} in the sense that it is algorithmically computable
  directly from initial data without having to solve any PDEs on the
  initial data hypersurface.  }

\keywords{Petrov type, Kerr spacetime, initial data, invariant characterisation}

\maketitle

\section{Introduction}
The Petrov classification \cite{Pet54} is an algebraic classification
of the Weyl tensor, $C_{abcd}$, based on the number of \emph{Principal
Null Directions} (\emph{PND}s). A PND is a null vector $k^a$
satisfying the condition
\begin{eqnarray}
k_{[a}C_{b]cd[e}k_{f]}k^ck^d=0,\label{Eq:PND}
\end{eqnarray}
---see \cite{GriPod09, SteMacHer80}. Although there are different ways
of presenting Petrov's classification, it is particularly transparent
when expressed in spinor notation.  The Weyl spinor can be written as
\begin{eqnarray}\label{eq:Petrovpnds}
\Psi_{ABCD}= \alpha_{(A}\beta_{B}\gamma_{D}\delta_{D)},
\end{eqnarray}
where each valence-1 spinor in equation \eqref{eq:Petrovpnds}
corresponds to a PND ---see \cite{PenRin86}.  Depending on whether
there are four distinct, two repeated, two pairs of repeated, three
repeated or four repeated PNDs, the Weyl spinor is said to be of
Petrov type I, II, D, III or N, respectively. The sixth case called
type O is the conformally flat case in which $\Psi_{ABCD}=0$. A
spacetime is said to be \emph{algebraically general} if it is of
Petrov type I and \emph{algebraically special} otherwise (cases II, D,
III, N, O).  The degree of specialisation can be visualised in the
following Penrose--Petrov diagram \cite{Ste91} where the arrows
indicate degeneration of one type into another.
\begin{figure}[h]
  \centering \includegraphics[width=0.5\textwidth]{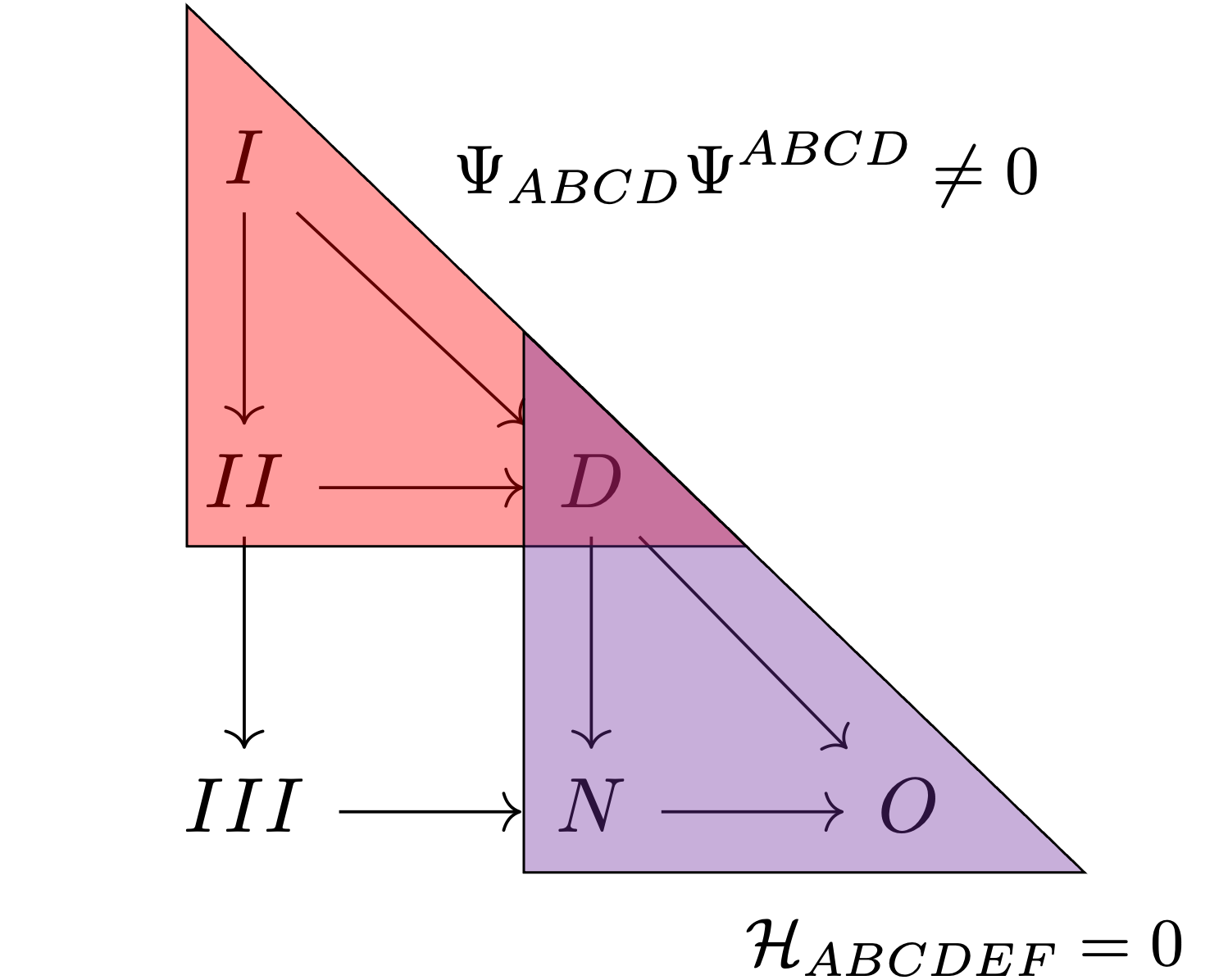}
  \caption{Penrose--Petrov diagram. Here, the Petrov types in the blue
    region are characterised by $\mathcal{H}_{ABCDEF}=0$ (see equation
    \eqref{PetrovDZeroQuantity}) and the Petrov types in the red
    region are characterised by $\Psi_{ABCD}\Psi^{ABCD}\neq 0$. At the
    intersection is Petrov type D.}
  \label{fig:example}
\end{figure}
A common technique for finding exact solutions to the Einstein field
equations is to make the simplifying assumption that the spacetime
admits a null congruence associated to a repeated PND
\cite{SteMacHer80}. Hence, many known explicit solutions to the
Einstein field equations are algebraically special.  The case of
Petrov type D is particularly important because it is the class that
contains all of the well-known \emph{explicit} solutions describing
black hole spacetimes: Schwarzschild, Reissner-Nordstr\"{o}m, Kerr and
their generalisations.  The Kerr spacetime ---see \cite{AndBacBlu16}
for a review--- is the prototypical example of a rotating black hole
solution, and is central to several open problems such as the
\emph{final state conjecture} and the \emph{black hole stability
problem} \cite{Lan21, DafRod10}.  Roughly speaking, the Kerr spacetime
is singled out of all vacuum type D solutions by the property that it
is asymptotically flat and admits a Killing spinor with a real-valued
associated Killing vector \cite{BacVal10a, BacVal10b}.

\newpage

Since many of the outstanding problems in Mathematical General
Relativity are formulated in the framework of the Cauchy problem, it
is of considerable interest to be able to characterise type D
solutions, and in particular, the Kerr solution, at the level of
initial data. A characterisation of initial data giving rise to a
Petrov type D development was given in \cite{Gar16}, see Theorem 6,
forming the basis of a characterisation of Kerr initial data therein,
and generalised to a local non-negative invariant characterisation in
\cite{Gar15}.  These characterisations, while being algorithmic, are
algebraically complicated. On the other hand, a \emph{global} approach
to quantifying \emph{non-Kerrness} was given in
\cite{BacVal10a}. However, it has the drawback that it is defined in
terms of the solution of an elliptic PDE system defined over the
initial hypersurface, which, although linear, nevertheless poses a
challenge to compute in practice. In this article, we present an
alternative characterisation of initial data for type D spacetimes,
and a resulting invariant measure of non-Kerness similar in spirit to
that of \cite{BacVal10a} but defined entirely in terms of curvature
invariants. As a result, this invariant is computable directly from
the initial data, without having to solve a PDE system on the initial
hypersurface.

\medskip

This paper is structured as follows: in Section \ref{Sec:Background},
we collect together relevant background on Petrov type, Killing
spinors and their interconnections; in Section
\ref{Sec:Characterisation} we give our characterisation of initial
data for type D spacetimes; in Section \ref{sec:ConstructInvariant} we
encode the latter characterisation in terms of a non-negative integral
invariant; finally, in Section \ref{Sec:Deviation} we give an
application of the invariant as a measure of \emph{non-Kerrness} on a
suitable class of initial data.

\medskip

\noindent Many of the calculations in this paper were carried out using the xAct
computer algebra suite, \cite{Xact}.

\section{Background}
\label{Sec:Background}

In this section we collect together the relevant background on Petrov
type, Killing spinors and their interconnections.

\subsection{Notation and conventions}
\label{Sec:NotationAndConventions}

For spinors we will follow the conventions of \cite{PenRin84}; in
particular, the metric signature is taken to be $(+,-,-,-)$.  For
spacetime tensor indices, lowercase letters from the first half of the
alphabet will be used. For spatial tensor indices, letters starting
from $i$ will be used. For spinor indices, uppercase letters will be
used. The spin metric and its inverse will be denoted by
$\epsilon_{AB}, \epsilon^{AB}$.  We will restrict here to vacuum
spacetimes; the only non-trivial curvature component is therefore the
Weyl spinor, denoted $\Psi_{ABCD}$.  In a slight abuse of notation
when writing the spinorial counterparts of tensors such as
$\xi_{AA'}=\sigma^{a}_{AA'}\xi_a$, where $\sigma^{a}_{AA'}$ are the
Infeld-Van-der-Waerden symbols, the $\sigma^{a}_{AA'}$ will be omitted
for conciseness and we will simply write $\xi_a=\xi_{AA'}$.
Occasionally, we will use index-free notation when the index structure
of an expression is obvious.
Additionally, we will make use of the so-called space-spinor
formalism ---see \cite{Val16}. For a self-contained discussion, the
basics of the formalism used in this article are described here.

\medskip

Given a timelike vector $N^a$, normsalised as $N_a N^a=1$ we consider
the spinor $N^{AA'}=N^a$, satisfying
$2N_{AA'}N^{BA'}=\epsilon_{A}{}^{B}$.
In these normalisation
conventions, a spacetime spinor $\xi_{AA'}$ splits as
\[
\xi_{AA'}=\xi N_{AA'}-\sqrt{2} N^B{}_{A'}{\xi}_{(AB)}.
\]
where $ \xi:=N^{AA'}\xi_{AA'}$ and
$\xi_{(AB)}:=\sqrt{2}N_{(A}{}^{A'}\xi_{B)A'}$.
Consequently, the Levi-Civita connection splits as
\begin{eqnarray*}
\nabla_{AA'} = N_{AA'}\mathcal{P} -\sqrt{2}
N^B{}_{A'}\mathcal{D}_{AB},
\end{eqnarray*}
in terms of the normal derivative $\mathcal{P} = N^{AA'}\nabla_{AA'}$,
and the \emph{Sen} derivative, $\mathcal{D}_{AB} =
\sqrt{2}N_{(A}{}^{A'}\nabla_{B)A'}$.  The \emph{Weingarten spinor} is
defined as
\[\chi_{ABCD} := \sqrt{2} N_D{}^{C'}\mathcal{D}_{AB}N_{CC'}.\]
Similarly, one introduces the acceleration
\[A_{AB}:=2 N_B{}^{A'}\mathcal{P} N_{AA'}.\]
If $\chi_{(A}{}^Q{}_{B)Q}=0$ then the distribution induced by
$N^{AA'}$ is integrable and $\chi_{ABCD}$ corresponds to the spinorial
counterpart of the extrinsic curvature. We will assume this to be the
case from this point onwards. To fix normalisation factors when
translating to tensorial expressions, it is enough to recall that
$\nabla_a N_b = N_{a}a_{b} + K_{ab}$ where $a^b$ is the acceleration
and $K_{ab}$ the extrinsic curvature, and observe that the above
definitions imply
\begin{eqnarray*}
\nabla_{AA'}N_{CC'}=-A_{CB}N^{B}{}_{C'}N_{AA'} +
2\chi_{ABCD}N^{B}{}_{A'}N^{D}{}_{C'}.
\end{eqnarray*}
In particular, notice that $a_a=-\frac{1}{\sqrt{2}}A_{AA'}$.
Furthermore, one introduces the operators $D_{AB}$ and $D_{N}$ via
\begin{eqnarray}  
&& D_{AB}\xi_C= \mathcal{D}_{AB}\xi_C -\sqrt{2}
  \chi_{AB}{}^Q{}_{C}\xi_Q, \\ && D_{N} \xi_A
  =\mathcal{P}\xi_{A}-\tfrac{1}{2}A_{A}{}^{B}\xi_{B}, \label{eq:DNSpinor}
\end{eqnarray}
extending their definition to spinors of higher valence analogously.
On one hand, $D_{AB}$ corresponds to the space-spinor counterpart of
the intrinsic Levi-Civita connection on the $3-$manifold $\mathcal{S}$
with normal vector $N^a$ as embedded in $\mathcal{M}$. On the other
hand, the action of $D_{N}$ is given by
\begin{equation}
\label{eq:DNForTensors}
D_N \xi_i = h_{i}{}^aN^b\nabla_b \xi_a.
\end{equation}
The relation between $\mathcal{P}$ and $D_N$, when restricted to act
on spatial vectors, is given by
\begin{align}\label{eq:PtoDNVectors}
\mathcal{P}\xi_{i} = D_{N}\xi_{i} + i \epsilon_{ijk}a^j\xi^k,
\end{align}
and is extended to tensors again via the Leibniz rule. A more detailed
discussion of these operations in terms of space-spinors is given in
Appendix \ref{Ap:NormDer}.  Though we will not need this here, we note
that these operators can be extended to act on spacetime spinors such
that they
satisfy \[D_{AB}\epsilon_{CD}=D_{AB}N_{AA'}=D_{N}\epsilon_{AB}=D_{N}
N_{AA'}=0.\]

\noindent The space-spinor conjugate of $\hat{\bm \xi}$ of any spinor
$\bm \xi$ is constructed by taking its complex conjugate and
transvecting with $\bm N$. For instance, the space-spinor conjugate of
the Weyl spinor $\Psi_{ABCD}$ is given by
\[
 \widehat{\Psi}_{ABCD}=N_{A}{}^{A'}N_{B}{}^{B'}N_{C}{}^{C'}N_{D}{}^{D'}\bar{\Psi}_{A'B'C'D'}.
 \]
See \cite{Val16} for further details on the space-spinor formalism.
For any valence-$n$ spinor we define
\begin{eqnarray*}
\Vert \mathcal{Q}\Vert^2 := \mathcal{Q}_{A_{1}A_{2}...A_{n}}\widehat{\mathcal{Q}}^{A_{1}A_{2}...A_{n}} \geq 0.
\end{eqnarray*}
For even $n=2m$, this agrees with the norm computed on the tensorial
counterpart\footnote{Due to the signature $(+,-,-,-)$, there is a
factor of $-1$ inherited from raising the indices with the
negative-definite spatial (inverse) metric $h^{ij}$; the $(-1)^m$
factor compensates to give a positive-definite norm.}  $\Vert \bm
Q\Vert^2 := (-1)^m Q_{i_1\cdots i_m}\bar{Q}^{i_1\cdots i_m}$.  As is
convention, we denote by $\lbrace \bm o, \bm \iota\rbrace$ a
\emph{spin dyad}; that is to say, a pair of valence-$1$ spinors
satisfying $o_A\iota^A=1$. It is also convenient to write
\[\epsilon_{\bm 0}{}^A = o^A,  \quad \epsilon_{\bm 1}{}^A = \iota^A, \qquad \epsilon^{\bm 0}{}_A = -\iota_A,  \quad \epsilon^{\bm 1}{}_A =
o_A,\] with index raising and lowering performed with respect to
$\epsilon^{AB}$ and $\epsilon_{AB}$.  In terms of the above, we define
the following \emph{spin coefficients} for the Sen connection:
\begin{equation}
\gamma_{\bm A \bm B}{}^{\bm C}{}_{\bm D} := -\epsilon_{\bm
  D}{}^Q\epsilon_{\bm A}{}^A\epsilon_{\bm
  B}{}^B\mathcal{D}_{AB}\epsilon^{\bm C}{}_Q,
\end{equation}
which encode a combination of the connection coefficients of $D_{AB}$ and the
extrinsic curvature.
Since it will be needed later, we give here the
the spinorial counterpart of the 3-dimensional volume form $\epsilon_{ijk}$ on $\mathcal{S}$:
\begin{align}\label{3dimVolume}
\epsilon_{ABCDEF}=\frac{i}{\sqrt{2}}(\epsilon_{AC}\epsilon_{BE}\epsilon_{DF}+\epsilon_{BD}\epsilon_{AF}\epsilon_{CE}).  
\end{align}

\subsection{Initial data sets and the Weyl spinor}
\label{sec:IDsetsWeylSpinor}

 An initial dataset for a vacuum spacetime (with vanishing
 cosmological constant) is defined as a triple $(\mathcal{S}, h_{ij},
 K_{ij})$, $\mathcal{S}$ being a $3-$manifold with Riemannian metric
 $h_{ij}$ and $K_{ij}$ a symmetric tensor, the extrinsic curvature,
 satisfying the vacuum Einstein constraint equations:
    \begin{eqnarray}
        && r[\bm h] - K_{ij}K^{ij} + K^2 = 0,\\ && D^iK_{ij} - D_j K =
      0.
    \end{eqnarray}
Here, $r[\bm h]$ denotes the Ricci scalar curvature of $h_{ij}$, and
$K=K_i{}^i$. A solution describes the initial data for a vacuum
spacetime $(\mathcal{M},\bm g)$, with $h_{ij}, K_{ij}$ corresponding
to the first and second fundamental forms of the embedding
$\mathcal{S}\hookrightarrow\mathcal{M}$.  There is a vast literature
on existence and uniqueness results for the Cauchy problem in General
Relativity \cite{Ren05} and although it might be possible to reduce
the regularity requirements of the initial data, from now on it will
be assumed that the initial data is smooth so that we can apply the
basic local existence theorems of \cite{ChoYor80} to ensure smoothness
of the solution. Observe that other characterisation results of the
Kerr spacetime such as \cite{Mar99, Mar00, BacVal10a, BacVal10b}
implicitly work in the smooth category as it is based on the Killing
spinor initial data result of \cite{GarVal08c} ---see Remark
\ref{remark:RegularityAssumptions}.  When discussing the spacetime
development of the initial data, $\mathcal{D}^{+}(\mathcal{S})$ will
denote the future domain of dependence of $\mathcal{S}$.

\medskip

\noindent
\noindent
The Einstein constraints are the trace parts of the Gauss--Codazzi--Mainardi equations:
\begin{align}
r_{ij} - E_{ij}\big|_{\mathcal{S}} - K_{i}{}^k K_{jk} + K K_{ij} &= 0, \label{GCM1} \\
\epsilon_{i}{}^{kl} D_{k} K_{lj} - B_{ij}\big|_{\mathcal{S}} &= 0. \label{GCM2}
\end{align}
Here, $E_{ij}\big|_{\mathcal{S}}$ and $B_{ij}\big|_{\mathcal{S}}$ are the
pullbacks to $\mathcal{S} \hookrightarrow \mathcal{M}$ of the
\emph{electric} and \emph{magnetic} parts of the Weyl tensor, defined by
\begin{equation*}  
E_{ab} = C_{acbd} N^c N^d, \qquad
B_{ab} = C^*_{acbd} N^c N^d.
\end{equation*}
with $N^a$ the unit normal to the hypersurface and
$C^*_{abcd}=-\frac{1}{2}\epsilon_{cd}{}^{fg}C_{abfg}$. 
The Weyl curvature is determined fully by $E_{ab}, B_{ab}$ as follows 
\begin{eqnarray}\label{eq_Weyl_To_EB_tensor}
    C_{abcd} = 2E_{b[c} g_{d]a} - 2E_{a[c} g_{d]b} +
    2\epsilon_{cdef} B_{[a}{}^{f} N_{b]} N^{e} + 2\epsilon_{abef}
      B_{[c}{}^{f} N_{d]} N^{e}
\end{eqnarray}
---see \cite{Val16}, for example, for further details.  Note that
$E_{ab}$ and $B_{ab}$ are intrinsic to $\mathcal{S}$ in the sense that
$N^aE_{ab}=N^aB_{ab}=0$.  Hence, when considering a spacetime
foliation $\mathcal{S} _{t} \subset \mathcal{M}$ for which, in some
local coordinates $(t,x^k)$, the hypersurface $\mathcal{S}$
corresponds to the $t=0$ slice, one has $E_{ij}=E_{ij}(t,x^k)$ and
$B_{ij}=B_{ij}(t,x^k)$ while $E_{ij}|_{\mathcal{S}}=E_{ij}(0,x^k)$ and
$B_{ij}|_{\mathcal{S}}=B_{ij}(0,x^k)$. Although introducing the symbol
$\quad|_{\mathcal{S}}$ in the notation may seem unnecessary, we do it
to emphasise that a given quantity is directly computable from initial
data $(\mathcal{S}, h_{ij},K_{ij})$. For example, in this case,
through equations \eqref{GCM1}--\eqref{GCM2}.

\medskip

\noindent The Weyl tensor is ``spinorialised" as follows
\begin{align}\label{eq_WeyTensor_to_spinor}
C_{abcd} = \Psi_{ABCD}\bar{\epsilon}_{A'B'}\bar{\epsilon}_{C'D'} +
\bar{\Psi}_{A'B'C'D'}\epsilon_{AB}\epsilon_{CD},
\end{align}
where $\Psi_{ABCD}=\Psi_{(ABCD)}$ is the \emph{Weyl spinor}.  The
spinorial counterpart of equation \eqref{eq_Weyl_To_EB_tensor} is
given by
\begin{eqnarray*} 
\Psi_{ABCD} = E_{ABCD} + iB_{ABCD},
\end{eqnarray*}
with $E_{ABCD}$, $B_{ABCD}$ denoting the space-spinorial counterparts
of $E_{ij}$ and $B_{ij}$, which can be recovered directly from
$\Psi_{ABCD}$ as follows
\begin{flalign}
  E_{ABCD}:&= \frac{1}{2}(\Psi_{ABCD} + \widehat{\Psi}_{ABCD}),
  \\ B_{ABCD}:&= \frac{i}{2}(-\Psi_{ABCD} +\widehat{\Psi}_{ABCD}).
\end{flalign}
Alternatively, one can introduce a complex tensor given by
$\Psi_{ij}=E_{ij} + i B_{ij}$ that succinctly encodes the geometric
information of $C_{abcd}$.

\subsection{A covariant spacetime characterisation of Petrov type D spacetimes}

As usual in the discussion of the Petrov classification,
one considers the following $\mathbb{C}$-valued scalars
\begin{eqnarray}
&& I := \Psi_{ij}\Psi^{ij}\equiv \Psi_{ABCD}\Psi^{ABCD}, \\
&& J := \Psi_i{}^j\Psi_j{}^k\Psi_k{}^i\equiv \Psi_{AB}{}^{CD}\Psi_{CD}{}^{EF}\Psi_{EF}{}^{AB}.\qquad 
\end{eqnarray}
A spacetime is \emph{algebraically special} if 
\[I^3 -6J^2=0,\] 
and, moreover, of Petrov type III, N or O if $I=J=0$
---see \cite{SteMacHer80}.  On the other hand, a spacetime is Petrov type D
if there exists a spin
dyad $\lbrace\bm{o},\bm\iota\rbrace$ in terms of which
\begin{equation}
    \Psi_{ABCD} = \Psi o_{(A}o_B\iota_C\iota_{D)},\label{eq:WeylSpinorTypeDdyad}
\end{equation} 
for some non-zero complex-valued scalar function $\Psi$. We call such a dyad
\emph{Petrov-adapted}. Note that there is no unique choice of dyad \cite{PenRin86}; 
it is determined only up to 
spin boosts and dyad exchange symmetry:
\begin{eqnarray} 
&& o_A\rightarrow e^{i\phi}o_A, \quad \iota_A \rightarrow e^{-i\phi}\iota_A ,\label{SpinBoosts} \\
&& o_A\rightarrow\iota_A,\quad \iota_A\rightarrow -o_A.\label{Interchange}
\end{eqnarray}

\noindent For type D, we have $I=\Psi^2/6$ and $J=-\Psi^3/36$, which,
of course, trivially satisfy the algebraically special condition.  In
particular, notice that the condition $\Psi \neq 0$ (and hence $I \neq 0$)
holds everywhere on Kerr ---see e.g. \cite{Mar00}.  The following
concomitant of the Weyl spinor is central to the forthcoming
discussion:
 \begin{equation}
 \mathcal{H}_{ABCDEF}:=\Psi_{PQR(A}\Psi^{QR}{}_{BC}\Psi^{P}{}_{DEF)}.
 \label{PetrovDZeroQuantity}
 \end{equation}
The relevance of this object is the content of the next Lemma:
 \begin{lemma} (Penrose \& Rindler, \cite{PenRin86})
    $(\mathcal{M},\bm g)$ is type D or more special at $p\in
   \mathcal{M}$ if and only if $\mathcal{H}_{ABCDEF}\vert_p=0$.
\end{lemma}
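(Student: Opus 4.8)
The plan is to treat $\Psi_{ABCD}$ pointwise as a totally symmetric valence-$4$ spinor and to reduce the vanishing of $\mathcal H_{ABCDEF}$ to a statement about the multiplicities of its principal spinors. By \eqref{eq:Petrovpnds} the Petrov type at $p$ is fixed by the multiplicities of $\alpha,\beta,\gamma,\delta$, and ``type D or more special'' means precisely that $\Psi_{ABCD}$ has no \emph{simple} principal null direction, i.e. its principal spinors occur with the multiplicity pattern $(2,2)$ (type D), $(4)$ (type N), or the trivial pattern $\Psi_{ABCD}=0$ (type O). The central auxiliary object is the \emph{Hessian spinor}
\[
\Upsilon_{ABCD}:=\Psi_{(AB}{}^{PQ}\Psi_{CD)PQ},
\]
the totally symmetric second transvectant of $\Psi$ with itself. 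A short computation, using the total symmetry of $\Psi_{ABCD}$, rewrites \eqref{PetrovDZeroQuantity} as
\[
\mathcal H_{ABCDEF}=c\,\Upsilon_{(ABC}{}^{P}\Psi_{DEF)P},\qquad c\neq0 ,
\]
the traceful pieces of the intermediate contraction being proportional to $\epsilon$-spinors and therefore annihilated by the final symmetrisation; equivalently, $\mathcal H$ is (up to a nonzero constant) the first transvectant, or Jacobian covariant, of $\Psi$ with $\Upsilon$.

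Given this identification the proof splits into two elementary steps. First, since $\Upsilon_{ABCD}$ and $\Psi_{ABCD}$ are both valence-$4$ symmetric spinors, the first transvectant $\Upsilon_{(ABC}{}^{P}\Psi_{DEF)P}$ vanishes if and only if $\Upsilon_{ABCD}=\lambda\,\Psi_{ABCD}$ for some constant $\lambda\in\mathbb C$ (including $\lambda=0$): this is the spinorial version of the statement that the Jacobian of two binary quartics vanishes identically precisely when the forms are proportional. Hence $\mathcal H_{ABCDEF}=0$ if and only if $\Upsilon_{ABCD}\propto\Psi_{ABCD}$. In the forward direction this also explains why all the special types collapse at once: whenever $\Upsilon\propto\Psi$ one has $\mathcal H\propto\Psi_{(ABC}{}^{P}\Psi_{DEF)P}$, and the latter vanishes identically because it is antisymmetric under exchange of the two $\Psi$ factors while being totally symmetric in its free indices.

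Second, I would translate $\Upsilon_{ABCD}\propto\Psi_{ABCD}$ into the multiplicity count. The key elementary fact is that a principal spinor of multiplicity $m$ is a zero of the Hessian $\Upsilon$ of multiplicity $2m-2$; in particular $\Upsilon$ vanishes identically exactly when $\Psi_{ABCD}$ is a perfect fourth power or zero (types N and O), while a \emph{simple} principal null direction of $\Psi$ is never a zero of $\Upsilon$. Consequently, if $\Psi$ possesses a simple principal null direction then $\Upsilon\not\propto\Psi$; whereas if every principal spinor has multiplicity $\geq2$, then either $\Psi$ is a fourth power or zero (so $\Upsilon=0\propto\Psi$), or the pattern is $(2,2)$, for which $\Upsilon=\lambda\Psi$ with $\lambda\neq0$. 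This yields $\Upsilon\propto\Psi\iff$ no simple PND $\iff$ type D, N or O, completing the chain.

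The main obstacle is the algebraic identity $\mathcal H_{ABCDEF}=c\,\Upsilon_{(ABC}{}^{P}\Psi_{DEF)P}$ with $c\neq0$: one must check carefully that the traceful contractions generated in \eqref{PetrovDZeroQuantity} drop out under symmetrisation and that the surviving transvectant is not accidentally zero (which can be confirmed by evaluating $\mathcal H$ on a type I Weyl spinor, e.g. with xAct). Everything downstream --- the Jacobian criterion and the Hessian multiplicity count --- is standard binary-form algebra. As a cross-check one should recover the boundary cases of Figure \ref{fig:example}: type III, which carries a triple and a simple PND, yields $\Upsilon\propto\ell^4\not\propto\Psi$ and hence $\mathcal H\neq0$, consistent with type III lying outside the blue region and with the intersection of $\{\mathcal H=0\}$ and $\{I\neq0\}$ isolating exactly type D.
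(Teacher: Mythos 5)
Your proof is correct, but it takes a genuinely different route from the paper, which in fact offers no proof of its own: the lemma is quoted from Penrose \& Rindler (equation (8.6.3)) and Stewart, where the result is established by writing $\Psi_{ABCD}=\alpha_{(A}\beta_B\gamma_C\delta_{D)}$ in principal spinors and running a case analysis over the coincidence patterns. You instead package the statement into classical invariant theory of the binary quartic: $\mathcal{H}_{ABCDEF}$ is the sextic covariant $T=(f,H)_1$, and $T=0$ exactly for root patterns $(2,2)$, $(4)$ and $0$. The one step you flag as the main obstacle --- the identity $\mathcal{H}_{ABCDEF}=c\,\Upsilon_{(ABC}{}^{P}\Psi_{DEF)P}$ with $c\neq 0$ --- does hold, and in fact with $c=\pm 1$, so no xAct check is needed: writing $M_{PABC}:=\Psi_{PA}{}^{QR}\Psi_{BCQR}$, the partial trace $\Psi_{PA}{}^{QR}\Psi^{A}{}_{CQR}$ is automatically \emph{antisymmetric} in $P,C$ (exchanging $P\leftrightarrow C$ amounts to relabelling the three contracted dummy pairs, an odd number, each costing a sign), hence is proportional to $\epsilon_{PC}I$; so $M_{PABC}$ differs from its totally symmetric part $\Upsilon_{PABC}$ only by pure-trace terms of the form $I(\epsilon_{PB}\epsilon_{AC}+\epsilon_{PC}\epsilon_{AB})$, which are killed by the outer symmetrisation, leaving $\mathcal{H}_{ABCDEF}=\Upsilon_{P(ABC}\Psi^{P}{}_{DEF)}$ exactly. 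The remaining ingredients --- the Jacobian of two same-degree binary forms vanishes iff they are proportional; a simple root of $f$ is never a root of its Hessian (by Euler's relation, if $f=xg$ then $H(0,1)=-g_y(0,1)^2=-9g(0,1)^2\neq 0$); $H\equiv 0$ iff $f$ is a fourth power or zero; and pattern $(2,2)$ forces $\Upsilon=\lambda\Psi$ with $\lambda\neq 0$ --- are all standard and correctly deployed, and your cross-check that type III (pattern $(3,1)$, Hessian a fourth power of the triple principal spinor) gives $\mathcal{H}\neq 0$ matches the paper's Figure 1. What your approach buys is a self-contained, purely algebraic proof that makes the mechanism transparent; what the paper's citation buys is brevity.
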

\noindent See pg. 80 of \cite{Ste91} or equation (8.6.3) of \cite{PenRin86} and
the discussion there for a detailed proof.

\medskip
 Note also that $\mathcal{H}_{ABCDEF}\vert_{\mathcal{S}}$ is
 computable from the initial data and it is natural to call
 \emph{type-D initial data}, the data for which
 $\mathcal{H}_{ABCDEF}\vert_{\mathcal{S}}=0$.  Nonetheless care is
 needed with the language here since the expression `type-D initial
 data' can be potentially misleading as
 $\mathcal{H}_{ABCDEF}\vert_{\mathcal{S}}=0$ is a necessary but
 \emph{insufficient} condition to guarantee that the spacetime
 development will be Petrov type D. To describe initial data sets
 whose development is guaranteed to be of Petrov type D, we introduce
 the term \emph{propagating-type-D initial data}.

\begin{remark}
It can be shown that the condition
\begin{equation}
    \mathcal{H}_{ABCD}:=J\Psi_{ABCD} - I \Psi_{(AB}{}^{PQ}\Psi_{CD)PQ}
    = 0
\end{equation}
characterises the property of being strictly more special than type
$II$.  Combining this with $I\neq 0$ gives a second characterisation
of type D.  In fact, much of the forthcoming analysis can be carried
out, with only minor adjustments, with $\mathcal{H}_{ABCD}$ in place
of $\mathcal{H}_{ABCDEF}$. However, we have chosen to use the
six-index object as it only involves up to cubic terms in the Weyl
spinor.
\end{remark}

The existence of hidden symmetries (encoded by Killing spinors) is
closely related to the Petrov type, as discussed in the remainder of
this subsection.  A \emph{Killing spinor} is a symmetric $2-$spinor,
$\kappa_{AB}$, satisfying the equation
\begin{align}\label{eq_Killing_spinor_eq}
  \nabla_{A'(A} \kappa_{BC)} = 0.
\end{align}
It is straightforward to show that on a vacuum spacetime, given a
Killing spinor, $\xi_{AA'} := \nabla^B{}_{A'}\kappa_{AB}$ is a
(complex-valued, in general) Killing vector. Moreover, one can show
that $\kappa_{AB}$ must necessarily satisfy the integrability
condition
\begin{align}\label{eq_buchdahl}
  \Psi_{(ABC}{}^{Q}\kappa_{D)Q}=0,
\end{align}
this being called the \emph{Buchdahl constraint}. This constrains the
spacetime to be of Petrov type D, N, or O ---see \cite{GarVal08c}.
Furthermore, in particular, on a region of spacetime which is type D,
one can construct a valence$-2$ Killing spinor in terms of a
Petrov-adapted spin dyad $\lbrace o, \iota\rbrace$ as follows:
\begin{eqnarray} \label{KSAnsatz}
    \kappa_{AB} = \Psi^{-1/3} o_{(A}\iota_{B)}
\end{eqnarray}
---see \cite{Pen70}.  The fact that this expression satisfies the
Killing spinor equation is guaranteed by the second Bianchi identity,
which in spinorial formulation reads
\begin{equation}
\nabla^A{}_{A'}\Psi_{ABCD}=0, \label{Bianchi}
\end{equation}
in vacuum ---see \cite{PenRin84}. Conversely, given a Killing spinor
on some open spacetime region $\mathcal{V}$, it follows that
$\Psi_{ABCD} \propto \kappa_{(AB}\kappa_{CD)}$ at each
$p\in\mathcal{V}$ by virtue of the Buchdahl constraint, equation
\eqref{eq_buchdahl}. Consequently, if $\kappa_{AB}$ is algebraically
general ($\kappa_{AB}\kappa^{AB}\neq 0$) at some point $p$, then the
spacetime is necessarily of Petrov type D at $p$.

\subsection{Killing spinor initial data}

Given the close connection between Killing spinors and Petrov type, it
is of interest to be able to encode the existence of a Killing spinor
at the level of initial data, that is to say \emph{Killing spinor
initial data}. This can be thought of as a spinorial analogue of the
Killing Initial Data (KID) equations, \cite{BeigChr96}.

\medskip

The Killing spinor initial data equations were first given in
\cite{GarVal08c} and further streamlined in \cite{BacVal10a}. In the
latter, it is shown that if $\varkappa_{AB}=\varkappa_{(AB)}$
satisfies
    \begin{eqnarray}
      && \mathcal{D }_{(AB}\varkappa_{CD)} =
      0, \label{SpatialKSEq}\\ && \varkappa_{(A}{}^Q\Psi_{BCD)Q} = 0
      ,\label{BuchdahlConstraint}
    \end{eqnarray}
on an open set $\mathcal{U}\subset\mathcal{S}$ and additionally
satisfies $\varkappa_{AB}\varkappa^{AB} \neq 0$, then it constitutes
initial data for a Killing spinor for a vacuum spacetime; indeed a
Killing spinor $\kappa_{AB}$ can be constructed as the solution of the
following initial value problem
\begin{equation}\label{KillingSpinorIVP}
    \left\{
\begin{array}{ll}
	 \square \kappa_{AB} - \Psi_{ABCD}\kappa^{CD} = 0 & \qquad
         \text{on}~\mathcal{D}^{+}(\mathcal{U}),\\ \kappa_{AB} =
         \varkappa_{AB} &\qquad\text{on}~\mathcal{U},\\ \mathcal{P}
         \kappa_{AB} =- \mathcal{D}_{(A}{}^Q\varkappa_{B)Q}
         &\qquad\text{on}~\mathcal{U}.
\end{array} \right.
\end{equation}

The above facts suggest the following approach to characterising type
D initial data. First, verify that the Weyl curvature is of type D on
the initial hypersurface (necessary condition). Then define
\begin{equation} \label{KSAnsatzIntrinsic}
    \varkappa_{AB} := \Psi^{-1/3} o_{(A}\iota_{B)},
\end{equation}
in terms of a Petrov-adapted spin dyad, and find \emph{supplementary}
conditions under which $\varkappa_{AB}$ solves the Killing spinor
initial data equations,
\eqref{SpatialKSEq}--\eqref{BuchdahlConstraint} on $\mathcal{U}$.
Then, use the Killing spinor $\kappa_{AB}$ resulting from solving the
initial value problem \eqref{KillingSpinorIVP} to constrain the Petrov
type of the ambient spacetime development, thereby \emph{propagating}
the Petrov type off the initial hypersurface.  In other words, the
supplementary conditions are the conditions needed to upgrade
\emph{type-D initial data} to \emph{propagating-type-D initial data}
(necessary and sufficient conditions).
\begin{remark}\label{remark:RegularityAssumptions}
    Note that the proof given in \cite{GarVal08c, BacVal10b} of the
    existence of a Killing spinor, $\kappa_{AB}$, as a solution to
    \eqref{KillingSpinorIVP}, assumes a smooth spacetime and a smooth
    Killing spinor candidate, $\varkappa_{AB}$. It would be of
    interest to extend this result to low-regularity spacetimes and
    low-regularity initial data $\varkappa_{AB}$. However, this is
    beyond the scope of this paper.
\end{remark}

\section{Characterising propagating-type-D initial data}
\label{Sec:Characterisation}

In this section we derive two equivalent
characterisations of propagating-type-D initial data: one given
in terms of a Petrov-adapted dyad and one manifestly covariant.
We begin with the 1+3 split of the Bianchi
identity \eqref{Bianchi} with respect to the spacetime
foliation, which reads
\begin{eqnarray}
&& \mathcal{P} \Psi_{ABCD} -
  \sqrt{2}\mathcal{D}_{(A}{}^Q\Psi_{BCD)Q} =
  0, \label{NormalDerivOfWeyl}\\ &&
  \mathcal{D}^{AB}\Psi_{ABCD}=0. \label{GaussConstraint}
\end{eqnarray}
We will refer to \eqref{GaussConstraint} as the \emph{Gauss constraint}. For
notational convenience, let us define $\dot{\Psi}_{ABCD} := \mathcal{P} \Psi_{ABCD}$. Notice that the equation,
\[ \dot{\Psi}_{ABCD}\equiv \sqrt{2}\mathcal{D}_{(A}{}^Q\Psi_{BCD)Q},\]
which follows from equation
\eqref{NormalDerivOfWeyl}, is manifestly intrinsic to the
hypersurface and therefore $(\bm\Psi|_{\mathcal{S}},\dot{\bm\Psi}|_{\mathcal{S}})$ is computable from the initial data $(\bm h, \bm K)$.
\\

\noindent It is clear that if $\mathcal{H}_{ABCDEF}=0$ on
$\mathcal{D}^{+}(\mathcal{U})$, then \emph{necessarily} one must have
\begin{equation}\dot{\mathcal{H}}_{ABCDEF}:=\mathcal{P}\mathcal{H}_{ABCDEF}=0
  \quad \text{on}\quad \mathcal{U}.
\end{equation}
Moreover, this condition can be recast as a manifestly intrinsic
condition by virtue of equation \eqref{NormalDerivOfWeyl}: 
\begin{eqnarray}
    \dot{\mathcal{H}}_{ABCDEF} = 2\dot{\Psi}_{PQR(A}\Psi^{QR}{}_{BC}\Psi^{P}{}_{DEF)}
    +\Psi_{PQR(A}\Psi^{QR}{}_{BC}\dot{\Psi}^{P}{}_{DEF)}.  \label{PDotIntrinsic}
\end{eqnarray}
What is remarkable is that, as we shall see, the conditions
\begin{equation}
    \mathcal{H}_{ABCDEF}=\dot{\mathcal{H}}_{ABCDEF}=0 \quad \text{on}\quad
    \mathcal{U} ,\label{TypeDInitialData}
\end{equation}
are in fact \emph{sufficient} to ensure propagation of Petrov
type D, provided $I\neq 0$ on $\mathcal{U}$.
\\

The first step is to derive the supplementary conditions ensuring that
$\varkappa_{AB}$ given by equation \eqref{KSAnsatz} satisfies equation
\eqref{SpatialKSEq}. Notice that, in contrast, equation
\eqref{BuchdahlConstraint} is automatically satisfied.  The following
proposition, which gives our first (non-covariant) characterisation of
propagating-type-D data, can be thought of as a corollary of Theorem 3
of \cite{GasWill22}. In the interest of being self-contained, we spell
out the details here.
\begin{proposition}
\label{Prop:Characterisation}
    Let $\mathcal{U}$ be an open subset of an initial dataset, on
    which the curvature is of type D. Let
    $\lbrace\bm{o},\bm\iota\rbrace$ be an adapted (but otherwise
    general) spin dyad. Then there exists an open neighbourhood of the
    spacetime development, containing $\mathcal{U}$, on which the
    curvature is of Petrov type D if and only if
    \begin{eqnarray}
        && \gamma_{\bm 1 \bm 1}{}^{\bm 0}{}_{\bm 1}\equiv
      \iota^A\iota^B\iota^C\mathcal{D}_{AB}\iota_C =
      0, \label{ExtremalNPLambda}\\ && \gamma_{\bm 0 \bm 0}{}^{\bm
        1}{}_{\bm 0} \equiv -o^Ao^Bo^C\mathcal{D}_{AB}o_C =
      0, \label{ExtremalNPSigma}
    \end{eqnarray}  
    hold on $\mathcal{U}$.
\end{proposition}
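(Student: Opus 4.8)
The plan is to reduce the statement to the Killing spinor initial data (KSID) equations \eqref{SpatialKSEq}--\eqref{BuchdahlConstraint} for the candidate $\varkappa_{AB}:=\Psi^{-1/3}o_{(A}\iota_{B)}$ of \eqref{KSAnsatzIntrinsic}. First I observe that, since $\varkappa_{AB}$ is built from the principal spinors of the type-D Weyl spinor \eqref{eq:WeylSpinorTypeDdyad}, the Buchdahl constraint \eqref{BuchdahlConstraint} is satisfied identically, while $\varkappa_{AB}\varkappa^{AB}\propto\Psi^{-2/3}\neq 0$, so $\varkappa_{AB}$ is everywhere algebraically general on $\mathcal{U}$. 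Hence the only nontrivial KSID equation is the spatial Killing spinor equation \eqref{SpatialKSEq}, and the whole proposition rests on showing that $\mathcal{D}_{(AB}\varkappa_{CD)}=0$ holds on $\mathcal{U}$ if and only if \eqref{ExtremalNPLambda}--\eqref{ExtremalNPSigma} hold. Granting this equivalence, the two directions of the proposition follow from the Killing spinor machinery already recalled: in the ``if'' direction, $\varkappa_{AB}$ is genuine KSID, so solving the initial value problem \eqref{KillingSpinorIVP} produces a spacetime Killing spinor $\kappa_{AB}$ which, being algebraically general on $\mathcal{U}$ and hence on a neighbourhood by continuity, forces Petrov type D there via the Buchdahl constraint $\Psi_{ABCD}\propto\kappa_{(AB}\kappa_{CD)}$; in the ``only if'' direction, a type-D development carries the Killing spinor \eqref{KSAnsatz}, whose restriction to $\mathcal{U}$ coincides (up to an irrelevant cube-root branch) with $\varkappa_{AB}$ and therefore satisfies \eqref{SpatialKSEq}.

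The core is thus the component analysis of the totally symmetric spinor $T_{ABCD}:=\mathcal{D}_{(AB}\varkappa_{CD)}$. I would expand $\mathcal{D}_{AB}\varkappa_{CD}$ by the Leibniz rule and re-express the dyad derivatives $\mathcal{D}_{AB}o_C$ and $\mathcal{D}_{AB}\iota_C$ in the basis $\{o_C,\iota_C\}$ using the Sen spin coefficients $\gamma_{\bm A\bm B}{}^{\bm C}{}_{\bm D}$; writing $\psi:=\Psi^{-1/3}$, this presents $T_{ABCD}$ as $\psi$ times spin-coefficient terms together with a term proportional to $\mathcal{D}_{AB}\psi$. Contracting the two extremal components then yields, after the traces collapse, $T_{ABCD}\,\iota^A\iota^B\iota^C\iota^D=\psi\,\gamma_{\bm 1\bm 1}{}^{\bm 0}{}_{\bm 1}$ and $T_{ABCD}\,o^Ao^Bo^Co^D=\psi\,\gamma_{\bm 0\bm 0}{}^{\bm 1}{}_{\bm 0}$, where the $\mathcal{D}\psi$ contributions drop out because the factor $o_{(C}\iota_{D)}$ multiplying $\mathcal{D}_{AB}\psi$ is annihilated on contraction with $\iota^C\iota^D$ (resp. $o^Co^D$). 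Since $\psi\neq 0$, these two components reproduce exactly \eqref{ExtremalNPLambda} and \eqref{ExtremalNPSigma}.

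The main obstacle is the remaining three ``middle'' components $T_{\bm 0\bm 0\bm 0\bm 1}$, $T_{\bm 0\bm 0\bm 1\bm 1}$, $T_{\bm 0\bm 1\bm 1\bm 1}$, which, unlike the extremal ones, genuinely involve the directional derivatives of $\psi$ and a wider set of spin coefficients. The plan is to eliminate the $\mathcal{D}\psi$ terms using the Gauss constraint \eqref{GaussConstraint}: inserting the type-D form \eqref{eq:WeylSpinorTypeDdyad} into $\mathcal{D}^{AB}\Psi_{ABCD}=0$ and applying the Leibniz rule expresses the three directional derivatives of $\Psi$ (equivalently of $\psi$) as $\Psi$ times spin coefficients, the initial-data analogue of the type-D Bianchi identities. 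Substituting these relations into the three middle components should collapse them into multiples of $\gamma_{\bm 1\bm 1}{}^{\bm 0}{}_{\bm 1}$ and $\gamma_{\bm 0\bm 0}{}^{\bm 1}{}_{\bm 0}$, equivalently making them vanish once \eqref{ExtremalNPLambda}--\eqref{ExtremalNPSigma} hold, so that $T_{ABCD}=0$ is equivalent to the two stated conditions. I expect the bookkeeping of this substitution --- tracking which spin coefficients survive and verifying that no independent condition beyond $\gamma_{\bm 1\bm 1}{}^{\bm 0}{}_{\bm 1}=\gamma_{\bm 0\bm 0}{}^{\bm 1}{}_{\bm 0}=0$ is produced --- to be the delicate part, and it is here that the hypothesis $I\neq 0$ (i.e. $\Psi\neq 0$) is essential, so that $\psi=\Psi^{-1/3}$ and the contracted Gauss constraint are well defined.
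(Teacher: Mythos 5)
Your proposal is correct and takes essentially the same route as the paper's proof: reduce the statement to the Killing spinor initial data equations for $\varkappa_{AB}=\Psi^{-1/3}o_{(A}\iota_{B)}$, note that the Buchdahl constraint and algebraic generality are automatic, identify the two extremal components of $\mathcal{D}_{(AB}\varkappa_{CD)}$ with $\Psi^{-1/3}\gamma_{\bm 1 \bm 1}{}^{\bm 0}{}_{\bm 1}$ and $\Psi^{-1/3}\gamma_{\bm 0 \bm 0}{}^{\bm 1}{}_{\bm 0}$, dispose of the three middle components by inserting the type-D form of the Weyl spinor into the Gauss constraint, and conclude via the KSID propagation result together with the Buchdahl constraint and continuity. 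The only cosmetic differences are that the paper's computation shows the middle components are exactly equivalent to the Gauss constraint equations (so they vanish identically on type-D data rather than merely collapsing to multiples of the extremal coefficients, though your hedged formulation covers this), and that you spell out the ``only if'' direction which the paper leaves implicit.
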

\begin{proof}
    Suppose that the initial dataset is of type D and consider $\varkappa_{AB} = \Psi^{-1/3}
    o_{(A}\iota_{B)}$, which is clearly well-defined by virtue of the assumption $I(\equiv \Psi^2/6)\neq
    0$ on $\mathcal{U}$. Note that 
    \[\varkappa_{AB}\varkappa^{AB} =
    -\tfrac{1}{2}\Psi^{-2/3} \neq 0.\]
    Since the Gauss constraint
    \eqref{GaussConstraint} is intrinsic to the hypersurface, we can
    substitute $\Psi_{ABCD}=\Psi o_{(A}o_B\iota_C\iota_{D)}$ therein 
    to get
\begin{eqnarray}
  &&  \mathcal{D}_{\bm 0 \bm 0}\Psi = -6 \Psi \gamma_{\bm 0 \bm 1}{}^{\bm 1}{}_{\bm 0},
  \label{GaussConstraint0}\\
  &&  \mathcal{D}_{\bm 0 \bm 1}\Psi =
  -\tfrac{3}{2}\Psi(\gamma_{\bm 0 \bm 0}{}^{\bm 0}{}_{\bm 1}
  + \gamma_{\bm 1 \bm 1}{}^{\bm 1}{}_{\bm 0}), \label{GaussConstraint1}\\
  &&  \mathcal{D}_{\bm 1 \bm 1}\Psi = -6\Psi \gamma_{\bm 0 \bm 1}{}^{\bm 0}{}_{\bm 1},
  \label{GaussConstraint2}
\end{eqnarray}
on $\mathcal{U}$. It follows from a short computation that these
equations are equivalent to the $\bm0\bm0\bm0\bm1,~\bm0\bm0\bm1\bm1$
and $\bm0\bm1\bm1\bm1$ components of the equation
$\mathcal{D}_{(AB}\varkappa_{CD)}=0$. The remaining two
(\emph{extremal}) components of $\mathcal{D}_{(AB}\varkappa_{CD)}$ are
given by
\[ \mathcal{D}_{\bm 0 \bm 0}\varkappa_{\bm 0 \bm 0} = \Psi^{-1/3} \gamma_{\bm 0 \bm 0}{}^{\bm 1}{}_{\bm 0}, \qquad  \mathcal{D}_{\bm 1\bm 1}\varkappa_{\bm 1\bm 1} = \Psi^{-1/3} \gamma_{\bm 1 \bm 1}{}^{\bm 0}{}_{\bm 1}.\]
Hence, if equations \eqref{ExtremalNPLambda}--\eqref{ExtremalNPSigma}
are satisfied, then $\mathcal{D}_{(AB}\varkappa_{CD)}=0$. It is also
straightforward to see that $\varkappa_{AB}$ satisfies the Buchdahl
constraint. Hence, $\varkappa_{AB}$ satisfies the Killing spinor
initial data equations \eqref{SpatialKSEq}--\eqref{BuchdahlConstraint}
and therefore gives rise to a Killing spinor $\kappa_{AB}$ on the
spacetime development. By continuity, $\kappa_{AB}\kappa^{AB}\neq 0$
on a sufficiently small neighbourhood, on which the Buchdahl
constraint for $\kappa_{AB}$ implies that $\Psi_{ABCD}=\Psi^{5/3}
\kappa_{(AB}\kappa_{CD)}\neq 0$. As a result, the spacetime
development is of Petrov type D in a suitably small \emph{spacetime}
neighbourhood of $\mathcal{U}$.
\end{proof}
Note that we follow essentially the same construction as in
\cite{Pen70}, in which $\kappa_{AB}=\Psi^{-1/3}o_{(A}\iota_{B)}$ is
shown to be a Killing spinor on a type D \emph{spacetime}. The main
difference lies in the fact that in \cite{Pen70} the full Bianchi
identities are used instead of only the Gauss constraint; here, since
we only assume a priori that the curvature is type D when
\emph{restricted} to $\mathcal{S}$, we cannot assume that
$\Psi_{ABCD}=\Psi o_{(A}o_B\iota_C\iota_{D)}$ away from $\mathcal{S}$
---in particular, we cannot substitute this relation into the
evolutionary components of the Bianchi identities,
\eqref{NormalDerivOfWeyl}. This additional information is instead
contained in the supplementary conditions
\eqref{ExtremalNPLambda}--\eqref{ExtremalNPSigma}.  \\

As a sanity check, note that the supplementary conditions
\eqref{ExtremalNPLambda}--\eqref{ExtremalNPSigma} are invariant under
spin boosts \eqref{SpinBoosts} and the spin dyad exchange symmetry
\eqref{Interchange} ---that is to say, they are not dependent on the
particular choice of Petrov-adapted spin dyad. It is also instructive
to write the supplementary conditions
\eqref{ExtremalNPLambda}--\eqref{ExtremalNPSigma} in terms of the
better-known spin coefficients of the NP formalism,
\cite{PenRin84}. Accordingly, let the normal to the hypersurface
$\mathcal{S}\hookrightarrow\mathcal{M}$ be given by
\[ N_{AA'} = N_0 l_{AA'} + N_1 m_{AA'} + \bar{N}_1 \bar{m}_{AA'} + N_2 n_{AA'},\]
where \quad
$l_{AA'} = o_A\bar{o}_{A'}, \quad m_{AA'}=o_A\bar{\iota}_{A'}, \quad n_{AA'}= \iota_A\bar{\iota}_{A'}.$
The tetrad vectors $\bm l, \bm m, \bar{\bm m}, \bm n$ are PNDs and a
short computation then shows that
    \[ \gamma_{\bm 1 \bm 1}{}^{\bm 0}{}_{\bm 1} = \sqrt{2}(N_0 \lambda + N_1 \nu), \qquad \gamma_{\bm 0 \bm 0}{}^{\bm 1}{}_{\bm 0} = \sqrt{2}(N_2 \sigma + \bar{N}_1\kappa),\]
    in terms of the NP spin coefficients
\begin{align*}
    & \lambda=-\bar{m}^a\bar{m}^b\nabla_bn_a, & \nu = -\bar{m}^a
  n^b\nabla_bn_a, & \sigma=m^am^b\nabla_bl_a, &
  \kappa=m^al^b\nabla_bl_a.
\end{align*}
Hence, the conditions $\gamma_{\bm 1 \bm 1}{}^{\bm 0}{}_{\bm
  1}=\gamma_{\bm 0 \bm 0}{}^{\bm 1}{}_{\bm 0}=0$ are consistent with a
well-known consequence of the \emph{Goldberg--Sachs Theorem},
\cite{GolSac09}, namely that
\[\lambda = \nu = \sigma = \kappa=0\] 
for a Petrov type D spacetime.  Moreover, it is straightforward to see
that $\gamma_{\bm 1 \bm 1}{}^{\bm 0}{}_{\bm 1}$ and $\gamma_{\bm 0 \bm
  0}{}^{\bm 1}{}_{\bm 0}$ are, in general, the only degrees of freedom
of $\lambda, \nu, \sigma, \kappa$ that are intrinsic to the
hypersurface $\mathcal{S}$, all other combinations involving normal
derivatives of either $l^a$ or $n^a$.

\medskip

 Although it is possible, in principle, to compute $\gamma_{\bm 1 \bm
  1}{}^{\bm 0}{}_{\bm 1}$ and $\gamma_{\bm 0 \bm 0}{}^{\bm 1}{}_{\bm
  0}$ at the level of initial data, it is of course undesirable to
have to first construct the Petrov-adapted frame\footnote{In order to
so, one could project the equations \eqref{Eq:PND}, or the
\emph{Bel--Debever} conditions \cite{Bel60, Bel62, Deb59}, onto
$\mathcal{S}$ and solve the resulting intrinsic equations.}. An
alternative is given by the following Lemma, which realises the spin
connection coefficients as the components of the covariant quantity
$\dot{\mathcal{H}}_{ABCDEF}$:

\begin{lemma}
\label{Lemma:Pdot}
If the curvature is of Petrov type D on
$\mathcal{U}\subset\mathcal{S}$, then
\begin{equation} 
\dot{\mathcal{H}}_{ABCDEF} = -\frac{1}{8}\Psi^3\left(\gamma_{\bm 1 \bm
  1}{}^{\bm 0}{}_{\bm 1} o_{(A}o_Bo_Co_D o_E\iota_{F)}  +\gamma_{\bm 0 \bm 0}{}^{\bm 1}{}_{\bm
  0}o_{(A}\iota_B\iota_C\iota_D\iota_E\iota_{F)}\right)
\end{equation}
on $\mathcal{U}$, in terms of a Petrov-adapted spin dyad $\lbrace \bm
o, \bm\iota\rbrace$.
\end{lemma}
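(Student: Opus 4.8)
The plan is to work throughout in a Petrov-adapted dyad, so that the type D hypothesis permits the substitution $\Psi_{ABCD} = \Psi\, o_{(A}o_B\iota_C\iota_{D)}$ from \eqref{eq:WeylSpinorTypeDdyad}, and to reduce everything to an explicit expansion in the dyad basis. Equation \eqref{PDotIntrinsic} already presents $\dot{\mathcal{H}}_{ABCDEF}$ as a totally symmetric cubic contraction of two copies of $\Psi_{ABCD}$ with one copy of $\dot{\Psi}_{ABCD}$; since a symmetric valence-$6$ spinor is fixed by its seven dyad components, the whole problem reduces to computing these components and showing that only the $o^5\iota$ and $o\iota^5$ ones survive.

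First I would compute $\dot{\Psi}_{ABCD}$ using the manifestly intrinsic form $\dot{\Psi}_{ABCD} = \sqrt{2}\,\mathcal{D}_{(A}{}^{Q}\Psi_{BCD)Q}$. Substituting the type D form and applying the Leibniz rule, the Sen derivative either hits the scalar $\Psi$, producing directional derivatives $\mathcal{D}_{\bm A\bm B}\Psi$, or hits the dyad legs $o_A,\iota_A$, producing the spin coefficients $\gamma_{\bm A\bm B}{}^{\bm C}{}_{\bm D}$. Carrying out the contraction on $Q$ and the symmetrisations, I obtain the five dyad components of $\dot{\Psi}_{ABCD}$ in terms of $\mathcal{D}_{\bm A\bm B}\Psi$ and the various $\gamma$'s. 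At this point I would invoke the Gauss constraint in the form \eqref{GaussConstraint0}--\eqref{GaussConstraint2} to trade every occurrence of $\mathcal{D}_{\bm A\bm B}\Psi$ for the non-extremal spin coefficients. This is the key structural input: it removes the scalar-derivative terms and renders $\dot{\Psi}_{ABCD}$ homogeneous of degree one in $\Psi$ times spin coefficients, which is precisely why the final answer comes out as $\Psi^3$ times the extremal coefficients.

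Next I would substitute this expansion, together with $\Psi_{ABCD} = \Psi\, o_{(A}o_B\iota_C\iota_{D)}$, into \eqref{PDotIntrinsic}. The two undifferentiated $\Psi$-factors contribute a factor $\Psi^2$ and a fixed pattern of dyad contractions (using $o_A\iota^A = 1$ and $o_Ao^A = \iota_A\iota^A = 0$), so that each of the seven components of $\dot{\mathcal{H}}_{ABCDEF}$ becomes a definite linear combination of the components of $\dot{\Psi}_{ABCD}$. The claim is then that, once the Gauss constraint is imposed, all contributions proportional to the non-extremal spin coefficients cancel in every component, while the two extremal coefficients $\gamma_{\bm 1\bm 1}{}^{\bm 0}{}_{\bm 1}$ and $\gamma_{\bm 0\bm 0}{}^{\bm 1}{}_{\bm 0}$ survive only in the $o^5\iota$ and $o\iota^5$ slots respectively, each with coefficient $-\tfrac{1}{8}\Psi^3$. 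Reassembling the surviving components into the symmetric valence-$6$ spinor yields the stated formula.

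The main obstacle is the combinatorial bookkeeping in this last step: the totally symmetric triple product in \eqref{PDotIntrinsic} expands into a large number of monomials in $o_A,\iota_A$, and one must track the symmetrisation weights carefully to verify that every non-extremal contribution cancels once the Gauss constraint is used. Conceptually this cancellation is to be expected, since $\mathcal{H}_{ABCDEF}$ vanishes identically on the type D locus, so its normal derivative can only be sourced by directions transverse to that locus, which are precisely the extremal (shear- and twist-like) spin coefficients identified in Proposition \ref{Prop:Characterisation}. In practice I would organise the computation by the number of $o$-legs in each component and verify the cancellations slot by slot, which is where a computer algebra check (as the authors note they used \textsf{xAct}) is most valuable.
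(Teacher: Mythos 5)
Your proposal is correct and follows essentially the same route as the paper: the paper's proof is precisely a direct computation from equation \eqref{PDotIntrinsic}, substituting the type D form $\Psi_{ABCD}=\Psi\, o_{(A}o_B\iota_C\iota_{D)}$ and using the Gauss constraint components \eqref{GaussConstraint0}--\eqref{GaussConstraint2} to trade the derivatives $\mathcal{D}_{\bm A\bm B}\Psi$ for spin coefficients, exactly as you describe. Your write-up in fact supplies more organisational detail (the slot-by-slot dyad bookkeeping and the conceptual reason for the cancellation of non-extremal terms) than the paper's one-line proof.
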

\begin{proof}
  Follows by a direct computation from equation \eqref{PDotIntrinsic},
  using relations \eqref{GaussConstraint0}--\eqref{GaussConstraint2}.
\end{proof}

Combining Lemma \ref{Lemma:Pdot} and Proposition
\ref{Prop:Characterisation}, we then obtain the following:
\begin{theorem}\label{thm:PetrovTypeD-detect}
    Let $\mathcal{U}\subset \mathcal{S}$ be an open subset of a smooth
    initial dataset $(\mathcal{S},\bm h, \bm K)$, on which $I\neq
    0$. Then there exists an open neighbourhood of the resulting
    spacetime development on which the curvature is of Petrov type D
    if and only if $\mathcal{H}_{ABCDEF} = \dot{\mathcal{H}}_{ABCDEF}
    = 0$ on $\mathcal{U}$.
\end{theorem}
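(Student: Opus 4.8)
The plan is to assemble the theorem directly from the three results established above, with the Penrose--Rindler Lemma supplying the dictionary between the covariant quantity $\mathcal{H}_{ABCDEF}$ and the Petrov type. The key preliminary observation is that $\mathcal{H}_{ABCDEF}=0$ by itself only certifies that the curvature is type D \emph{or more special}; it is the hypothesis $I\neq 0$ that excludes the genuinely more special possibilities (types N and O, for which $I=J=0$), thereby upgrading $\mathcal{H}_{ABCDEF}|_{\mathcal{U}}=0$ to the assertion that the curvature is \emph{exactly} type D on $\mathcal{U}$. Once type D holds on $\mathcal{U}$ a Petrov-adapted dyad $\lbrace\bm o,\bm\iota\rbrace$ is available, and both Lemma \ref{Lemma:Pdot} and Proposition \ref{Prop:Characterisation} become applicable.

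For the \emph{only if} direction I would argue exactly as in the discussion preceding the theorem. If the development is type D on some open spacetime neighbourhood of $\mathcal{U}$, then $\mathcal{H}_{ABCDEF}=0$ throughout that neighbourhood by the Penrose--Rindler Lemma, so in particular $\mathcal{H}_{ABCDEF}|_{\mathcal{U}}=0$; and since $\mathcal{H}_{ABCDEF}$ vanishes on an open set containing $\mathcal{U}$, its normal derivative $\dot{\mathcal{H}}_{ABCDEF}=\mathcal{P}\mathcal{H}_{ABCDEF}$ also vanishes on $\mathcal{U}$. This direction uses nothing beyond the definition of $\dot{\mathcal{H}}_{ABCDEF}$ and the fact that $\mathcal{P}$ differentiates along the foliation.

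For the \emph{if} direction, assume $\mathcal{H}_{ABCDEF}=\dot{\mathcal{H}}_{ABCDEF}=0$ on $\mathcal{U}$ with $I\neq 0$. By the preliminary observation the curvature is type D on $\mathcal{U}$, so I may insert the expression of Lemma \ref{Lemma:Pdot} for $\dot{\mathcal{H}}_{ABCDEF}$. Since $\Psi\neq 0$ (equivalently $I\neq 0$) and the two totally symmetric valence-$6$ spinors $o_{(A}o_Bo_Co_Do_E\iota_{F)}$ and $o_{(A}\iota_B\iota_C\iota_D\iota_E\iota_{F)}$ are distinct members of the standard monomial basis of the space of totally symmetric valence-$6$ spinors, hence linearly independent, the vanishing of $\dot{\mathcal{H}}_{ABCDEF}$ forces each coefficient to vanish separately, i.e. $\gamma_{\bm 1 \bm 1}{}^{\bm 0}{}_{\bm 1}=\gamma_{\bm 0 \bm 0}{}^{\bm 1}{}_{\bm 0}=0$ on $\mathcal{U}$. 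These are precisely the supplementary conditions \eqref{ExtremalNPLambda}--\eqref{ExtremalNPSigma}, whence Proposition \ref{Prop:Characterisation} produces a spacetime neighbourhood of $\mathcal{U}$ on which the curvature is Petrov type D.

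The only step that I expect to require real care is this extraction in the \emph{if} direction: one must confirm the linear independence of the two spinor monomials appearing in Lemma \ref{Lemma:Pdot}, so that $\dot{\mathcal{H}}_{ABCDEF}=0$ genuinely \emph{decouples} into the two scalar conditions rather than merely constraining a single combination of them. Everything else is a direct invocation of the Penrose--Rindler Lemma, Lemma \ref{Lemma:Pdot} and Proposition \ref{Prop:Characterisation}, together with the observation that $I\neq 0$ pins the type down to D.
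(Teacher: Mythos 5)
Your proposal is correct and follows essentially the same route as the paper's proof: the Penrose--Rindler Lemma plus $I\neq 0$ pins the curvature on $\mathcal{U}$ to exactly type D, Lemma \ref{Lemma:Pdot} with $\dot{\mathcal{H}}_{ABCDEF}=0$ forces $\gamma_{\bm 1 \bm 1}{}^{\bm 0}{}_{\bm 1}=\gamma_{\bm 0 \bm 0}{}^{\bm 1}{}_{\bm 0}=0$, and Proposition \ref{Prop:Characterisation} then yields the type D development. Your explicit check that the two monomials $o_{(A}o_Bo_Co_Do_E\iota_{F)}$ and $o_{(A}\iota_B\iota_C\iota_D\iota_E\iota_{F)}$ are linearly independent, so the two scalar conditions decouple, is a detail the paper leaves implicit but is exactly the right point to verify.
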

\begin{proof}
    The only if direction is immediate. Conversely, suppose that
    $\mathcal{H}_{ABCDEF} = \dot{\mathcal{H}}_{ABCDEF} = 0$ on
    $\mathcal{U}$. Then the curvature is of type D on $\mathcal{U}$
    (in particular, $\Psi\neq 0$) and Lemma \ref{Lemma:Pdot} along
    with $\dot{\mathcal{H}}_{ABCDEF}=0$ imply that $\gamma_{\bm 1 \bm
      1}{}^{\bm 0}{}_{\bm 1}=\gamma_{\bm 0 \bm 0}{}^{\bm 1}{}_{\bm
      0}=0$ in a Petrov-adapted dyad. Proposition
    \ref{Prop:Characterisation} then implies that the spacetime
    development is of type D in $\mathcal{D}^{+}(\mathcal{U})$.
\end{proof}
\begin{remark}
    If one opted to use the projected normal derivative $D_{N}$, as
    given by equation \eqref{eq:DNSpinor}, instead of $\mathcal{P}$
    the result holds identically since
    \begin{equation*}
            D_{N}\mathcal{H}_{ABCDEF}= \dot{\mathcal{H}}_{ABCDEF}-3
            A_{(A}{}^{Q}\mathcal{H}_{BCDEF)Q}.
    \end{equation*}
    In other words,
    $\mathcal{H}_{ABCDEF}=\dot{\mathcal{H}}_{ABCDEF}=0$ is equivalent
    to $ \mathcal{H}_{ABCDEF} =D_{N}\mathcal{H}_{ABCDEF} =0$.
\end{remark}

\begin{remark}\label{rem:ConfGenTheorem1}
    Although the discussion given in this paper assumes the vacuum
    Einstein field equations hold, a formally identical Petrov type D
    characterisation for initial data for Friedrich's \emph{conformal
    Einstein field equations} (CEFEs) \cite{Fri81} can be trivially
    obtained. In fact, revisiting the discussion leading to Theorem
    \ref{thm:PetrovTypeD-detect} one realises that the only place
    where the vacuum Einstein field equations were used was in
    equation \eqref{Bianchi}. Noticing that the equation for the
    \emph{rescaled Weyl spinor} $\phi_{ABCD}$ is formally identical to
    equation \eqref{Bianchi} and the \emph{conformal Killing spinor
    initial data equations} of \cite{GasWill22} are formally identical
    to equations \eqref{SpatialKSEq} and \eqref{BuchdahlConstraint},
    then one concludes that Theorem \ref{thm:PetrovTypeD-detect} holds
    for initial data for the vacuum CEFEs formally replacing
    $\Psi_{ABCD}$ with $\phi_{ABCD}$ in the definition of
    $\mathcal{H}_{ABCDEF}$.
\end{remark}

\section{Constructing an invariant}
\label{sec:ConstructInvariant}

In addition to being simple to compute, the covariant
characterisation given by Theorem \ref{thm:PetrovTypeD-detect} has the added benefit that it can be used to quantify
\emph{deviation} from the property of being propagating-type-D. 
With this application in mind, it is then natural to consider 
\begin{align*}
&\mathcal{I}_1(\mathcal{U}, \bm h, \bm K) := \int_{\mathcal{U}}
  \Vert \bm{\mathcal{H}} \Vert^2 ~d\text{vol}_{\bm h}, &&
  \mathcal{I}_2(\mathcal{U}, \bm h, \bm K) := \int_{\mathcal{U}}
  \Vert\bm{\mathcal{\dot{H}}}\Vert^2 ~d\text{vol}_{\bm h},
\end{align*}
where $d\text{vol}_{\bm h}$ denotes the volume-form on $(\mathcal{S},\bm h)$,
 while $\bm{\mathcal{H}}$ and $\bm{\mathcal{\dot{H}}}$ denote $\mathcal{H}_{ABCDEF}$ and $\dot{\mathcal{H}}_{ABCDEF}$, as given in equations \eqref{PetrovDZeroQuantity}
 and \eqref{PDotIntrinsic}, respectively. 
 Notice, however, that the physical units of $\mathcal{H}_{ABCDEF}$ and $\dot{\mathcal{H}}_{ABCDEF}$ (and hence of $\mathcal{I}_1$ and $\mathcal{I}_2$) differ. Indeed,
 \begin{align*}
   [\mathcal{H}] = L^{-6} , \qquad [\dot{\mathcal{H}}] = L^{-7},
 \end{align*}
 where $L$ represents the spatial length in geometric units.
 If $\mathcal{U}$ were to have some characteristic length scale, $\ell$, one might consider $\mathcal{I}_{1} + \ell^2 \mathcal{I}_{2}$ as a measure of deviation from propagating-type-D data. 
In the absence of a geometrically motivated reference scale in $\mathcal{U}$ in general, however, it is not clear how one might combine $\mathcal{I}_1$ and $\mathcal{I}_2$ into a single invariant.

\medskip

\noindent Nonetheless, we can arrive at a single invariant if we
restrict our attention to \emph{asymptotically-Euclidean} data and if
take $\mathcal{U}=\mathcal{S}$; accordingly, our invariant will be a
global rather than a local one.  To be self-contained we recall the
following:
\begin{definition}\label{def_asympt_Euclidean}
An initial data set $(\mathcal{S},\bm h, \bm K)$ is
asymptotically-Euclidean if there exists some compact set
$\mathcal{B}$, diffeomorphic to a ball, such that
$\mathcal{S}\setminus\mathcal{B}$ is a disjoint union of open sets
$\mathcal{S}_n$, with $n \in \mathbb{N}$, which are diffeomorphic to
the complement of a closed ball in $\mathbb{R}^3$ and for each
asymptotic end $\mathcal{S}_n$ there exist (asymptotically Cartesian)
coordinates $\{x^i\}$ in which
\[ h_{ij} = -\delta_{ij} + \mathcal{O}_k(r^{-q}), \qquad K_{ij}=\mathcal{O}_{k-1}(r^{-1-q}),\]
where $r := \sqrt{(x^{1})^2+(x^{2})^2+(x^{3})^2}$, for some $k>1$ and
$0<q<1$.
\end{definition}
Here $k$ indicates denotes the fall-off rate up to $k$ derivatives,
namely $f \in \mathcal{O}_{k}(r^{-q}) \implies \partial^l f \in
\mathcal{O}(r^{-q-l})$ for $l=0,\cdots, k$ ---see \cite{BeigChr96,
  Hua10}, for example.  It follows that for such data,
\[E_{ij} = \mathcal{O}_{k-2}(r^{-2-q}), \qquad B_{ij}=\mathcal{O}_{k-2}(r^{-2-q}).\]
Notice that for data satisfying these conditions,
$\mathcal{H}_{ABCDEF}=\mathcal{O}_{k-2}(r^{-3q-6})$ so that
$\mathcal{H}_{ABCDEF}=0$ at spatial infinity.

\medskip

Instead of constructing an invariant using $\mathcal{H}_{ABCDEF}$ directly,
it is convenient to use its spatial derivatives $D_{PQ}\mathcal{H}_{ABCDEF}$ ---denoted in index-free notation as $\bm D\bm{\mathcal{H}}$. 
Then, considering
data in the asymptotically Euclidean class and following the same approach as taken in \cite{dain2004new, ValWil17} we obtain:
\begin{theorem}\label{Coro:TypeDInvariant}
Let  $(\mathcal{S},\bm h, \bm K)$ be a smooth asymptotically Euclidean initial data set,
satisfying $I\neq 0$ everywhere on $\mathcal{S}$. 
Then the invariant 
    \begin{align}
    \mathcal{I}(\mathcal{S}, \bm h, \bm K) :=  \int_{\mathcal{S}}\left( \Vert \bm D\bm{\mathcal{H}}\Vert^2 +  \Vert \bm{\mathcal{\dot{H}}}\Vert^2\right)~d\text{vol}_{\bm h} \label{InvariantDef}
\end{align}
is well-defined and vanishes if and only if  $(\mathcal{S},\bm h, \bm K)$ is propagating-type-D initial data; that is to say, if and only if the spacetime development is Petrov type D in some open neighbourhood of $\mathcal{S}$.
\end{theorem}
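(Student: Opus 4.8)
The plan is to establish three things in turn: that the integral in \eqref{InvariantDef} converges, so that $\mathcal{I}$ is well-defined; that $\mathcal{I}=0$ forces both $\bm D\bm{\mathcal{H}}=0$ and $\bm{\mathcal{\dot{H}}}=0$ pointwise on $\mathcal{S}$ and, crucially, that the first of these upgrades to $\bm{\mathcal{H}}=0$; and finally that the resulting conditions $\bm{\mathcal{H}}=\bm{\mathcal{\dot{H}}}=0$ are exactly the hypotheses of Theorem \ref{thm:PetrovTypeD-detect}, which then settles both implications at once.

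For well-definedness I would simply feed the asymptotic fall-off of the data into the definitions. Since $E_{ij},B_{ij}=\mathcal{O}_{k-2}(r^{-2-q})$ and hence $\Psi_{ABCD}=\mathcal{O}(r^{-2-q})$, and since $\bm{\mathcal{H}}$ is cubic in $\bm\Psi$ while $\bm D\bm{\mathcal{H}}$ and $\bm{\mathcal{\dot{H}}}$ carry one extra (Sen-)derivative, one finds $\bm D\bm{\mathcal{H}},\bm{\mathcal{\dot{H}}}=\mathcal{O}(r^{-7-3q})$, so the integrand decays like $r^{-14-6q}$. Against the volume element $d\mathrm{vol}_{\bm h}\sim r^2\,dr\,d\Omega$ at each end, the radial integral $\int^{\infty} r^{-12-6q}\,dr$ converges for every $q>0$. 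This computation also explains the choice to integrate $\bm D\bm{\mathcal{H}}$ rather than $\bm{\mathcal{H}}$ directly: as remarked before the statement, $[\mathcal{H}]=L^{-6}$ whereas $[\dot{\mathcal{H}}]=L^{-7}$, and passing to the derivative restores dimensional homogeneity ($[D\mathcal{H}]=L^{-7}$) so that the two terms may be added without invoking an external length scale.

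The heart of the argument is the recovery of $\bm{\mathcal{H}}=0$. Because the integrand is a sum of the non-negative quantities $\Vert\bm D\bm{\mathcal{H}}\Vert^2$ and $\Vert\bm{\mathcal{\dot{H}}}\Vert^2$, smoothness gives $\mathcal{I}=0$ if and only if both vanish identically on $\mathcal{S}$; in particular $\bm{\mathcal{\dot{H}}}=0$ everywhere, which is already one half of Theorem \ref{thm:PetrovTypeD-detect}. To obtain the other half I would argue that $\bm D\bm{\mathcal{H}}=0$ means $\mathcal{H}_{ABCDEF}$ is parallel with respect to the intrinsic connection $D_{AB}$. Since $D_{AB}$ annihilates both $\epsilon_{AB}$ and $N_{AA'}$, it commutes with the space-spinor conjugation and is therefore metric-compatible for $\Vert\,\cdot\,\Vert^2$; hence $D_{AB}\Vert\bm{\mathcal{H}}\Vert^2=0$, and $\Vert\bm{\mathcal{H}}\Vert^2$ is constant on the connected manifold $\mathcal{S}$. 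The asymptotic estimate $\mathcal{H}_{ABCDEF}=\mathcal{O}(r^{-6-3q})\to 0$ along any end then forces this constant to be zero, so $\bm{\mathcal{H}}=0$ everywhere on $\mathcal{S}$.

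With $\bm{\mathcal{H}}=\bm{\mathcal{\dot{H}}}=0$ on all of $\mathcal{S}$ and $I\neq 0$ by hypothesis, Theorem \ref{thm:PetrovTypeD-detect} applied with $\mathcal{U}=\mathcal{S}$ produces a type-D spacetime neighbourhood of $\mathcal{S}$, i.e.\ the data is propagating-type-D. The converse is immediate: if the development is Petrov type D near $\mathcal{S}$ then $\mathcal{H}_{ABCDEF}\equiv 0$ there, whence $\bm{\mathcal{\dot{H}}}=0$ and $\bm D\bm{\mathcal{H}}=0$ on $\mathcal{S}$, so $\mathcal{I}=0$. The step I expect to require the most care is precisely the passage $\bm D\bm{\mathcal{H}}=0\Rightarrow\bm{\mathcal{H}}=0$: one must verify carefully that $D_{AB}$ is genuinely compatible with the hatted norm (so that parallelism of $\bm{\mathcal{H}}$ transfers to constancy of its norm) and that connectedness of $\mathcal{S}$ together with decay at a single asymptotic end is enough to kill the constant. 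This is the analogue, in the present purely algebraic setting, of the rigidity mechanism used in \cite{dain2004new, ValWil17}.
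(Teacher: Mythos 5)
Your proposal is correct and follows essentially the same route as the paper's own proof: pointwise vanishing of $\Vert\bm D\bm{\mathcal{H}}\Vert^2$ and $\Vert\bm{\mathcal{\dot H}}\Vert^2$, then using compatibility of $D_{AB}$ with the space-spinor conjugation (the paper phrases this as $D_{AB}N_{C}{}^{C'}=0$ in a Leibniz computation of $D_{PQ}\Vert\bm{\mathcal{H}}\Vert^2$) to conclude $\Vert\bm{\mathcal{H}}\Vert^2$ is constant, killing the constant via the asymptotic decay, and finally invoking Theorem \ref{thm:PetrovTypeD-detect}. Your explicit treatment of convergence and of connectedness of $\mathcal{S}$ only makes explicit what the paper leaves implicit in its pre-theorem asymptotic discussion.
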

\begin{proof}
If the initial data is of propagating-type-D then trivially $\mathcal{I}(\mathcal{S}, \bm h, \bm K) =0$.
To see that the converse is also true, note that if $ \mathcal{I}(\mathcal{S}, \bm h, \bm K) =0$ then
$\Vert \bm D\bm{\mathcal{H}}\Vert^2=\Vert\bm{\mathcal{\dot{H}}}\Vert^2=0$.  Hence, $\bm D\bm{\mathcal{H}}=\bm{\mathcal{\dot{H}}}=0$.
On the other hand, notice that

\begin{align}
D_{PQ} (\Vert \bm{\mathcal{H}} \Vert^2) = &
\mathcal{H}^{ABCDEF}D_{PQ}\hat{\mathcal{H}}_{ABCDEF}+
\hat{\mathcal{H}}^{ABCDEF}D_{PQ}\mathcal{H}_{ABCDEF} \nonumber \\ = &
2 \mathfrak{Re}
(\hat{\mathcal{H}}^{ABCDEF}D_{PQ}\mathcal{H}_{ABCDEF}),
\label{eq:DHsquare}
\end{align}
where we have used that $D_{AB}N_{C}{}^{C'}=0$.  Now, since $\bm
D\bm{\mathcal{H}}=0$, then using equation \eqref{eq:DHsquare} one has
$D_{PQ} (\Vert \bm{\mathcal{H}}\Vert^2)=0$. Thus, $\Vert
\bm{\mathcal{H}}\Vert^2 = c$ on $\mathcal{S}$ where $c$ is constant.
Exploiting the initial data asymptotic conditions, one concludes that
$c=0$ and hence $\bm{\mathcal{H}}=0$ on $\mathcal{S}$.  Together,
these conditions read $\bm{\mathcal{H}} =\bm{\mathcal{\dot{H}}}=0 $ on
$\mathcal{S}$. The conclusion then follows from Theorem
\ref{thm:PetrovTypeD-detect}.
\end{proof}
 In the asymptotically Euclidean case, one could consider that the
 natural length scale (in geometric units) in the problem is the ADM
 mass $m_{ADM}$, and hence alternatively use as invariant the
 following quantity:
        \begin{align}
     \tilde{\mathcal{I}}(\mathcal{S}, \bm h, \bm K) :=
     \int_{\mathcal{S}}\left( \Vert \bm{\mathcal{\dot{H}}}\Vert^2
     +\frac{1}{m_{ADM}^2} \Vert \bm{\mathcal{H}}\Vert^2
     \right)~d\text{vol}_{\bm h} ,\label{InvariantAlternative}
\end{align}
for initial data with $m_{ADM} \neq 0$.

\medskip

Although the calculations presented in this paper are particularly
clean using spinor notation, we emphasise that one can express the
invariants introduced above in tensorial rather than spinorial form.
In the remainder of this section we detail how to obtain and compute
the tensor counterparts of $\mathcal{H}_{ABCDEF}$ and
$\dot{\mathcal{H}}_{ABCDEF}$.  Using equation \eqref{3dimVolume}, a
direct calculation shows that
 \begin{align}
     & \epsilon_{i}{}^{lm}\Psi_{jl}\Psi_{k}{}^{p}\Psi_{pm} =
   \epsilon_{AB}{}^{PQGH}\Psi_{CDPQ}\Psi_{EF}{}^{JK}\Psi_{JKGH}
   \nonumber \\ &
   \phantom{\epsilon_{i}{}^{lm}\Psi_{jl}\Psi_{k}{}^{p}\Psi_{pm}}=
   \frac{i}{\sqrt{2}}\Psi_{CD(A}{}^{P}\Psi_{B)PGH}\Psi_{EF}{}^{GH}.
 \end{align}
 Recalling that in space-spinor formalism a total symmetrisation
 corresponds to taking the symmetric trace-free part of a spatial
 tensor, \cite{Val16}, and observing that
 \begin{equation}
     \mathcal{H}_{ijk}:=-i\sqrt{2}\epsilon_{(i}{}^{lm}\Psi_{j|l|}\Psi_{k)}{}^{p}\Psi_{pm}
     \label{eq:HInTensors}
 \end{equation}
 is trace-free, one concludes that $\mathcal{H}_{ijk}$ is the tensor
 counterpart of $\mathcal{H}_{ABCDEF}$.  Furthermore, using that
 \[
 D_N \epsilon_{ijk} = h_{i}{}^{b}h_{j}{}^{c}h_{k}{}^{d}N^a\nabla_a
 \epsilon_{bcd} =\epsilon_{aijk} a^a =0,\] and equation
 \eqref{eq:PtoDNVectors}, one gets
 $\dot{\epsilon}_{ijk}=\mathcal{P}\epsilon_{ijk}=0$, and so
\begin{align}
    \mathcal{\dot{H}}_{ijk} = &
    -i\sqrt{2}\left[\epsilon_{(i}{}^{lm}\dot{\Psi}_{j|l|}\Psi_{k)}{}^{p}\Psi_{pm}
      \right.\nonumber\\ & +
      \epsilon_{(i}{}^{lm}\Psi_{j|l|}\dot{\Psi}_{k)}{}^{p}\Psi_{pm} +
      \left. \epsilon_{(i}{}^{lm}\Psi_{j|l|}\Psi_{k)}{}^{p}\dot{\Psi}_{pm}\right], \hspace{5mm}
    \label{eq:HDotInTensors}
\end{align} 
where, again, the dot notation is a shorthand for application of the
operator $\mathcal{P}$.  For completeness $\Psi_{ij}$ and
$\dot{\Psi}_{ij}$ are given by
  \begin{flalign}
    \Psi_{ij} &= E_{ij} + iB_{ij}, \\ \dot{\Psi}_{ij} &\equiv
    D_{N}\Psi _{ij} - 2 i a^{l} \Psi _{(i}{}^{k} \epsilon _{j)kl}
    \nonumber \\ &= i \epsilon _{kl(i} D^{k}\Psi _{j)}{}^{l} + 3 \Psi
    _{(i}{}^{k} K_{j)k} -2 K\Psi _{ij} - \Psi ^{kl} K_{kl} h _{ij}
    , \label{EvolutionaryBianchiInTensors}
 \end{flalign}
     and where the second equality in
     \eqref{EvolutionaryBianchiInTensors} follows from the second
     Bianchi identity i.e. the tensorial analogue of equation
     \eqref{NormalDerivOfWeyl}.  Clearly the invariant of
     Theorem~\ref{Coro:TypeDInvariant} is algebraically computable at
     the level of initial data, as $E_{ij}$ and $B_{ij}$ are
     expressible in terms of initial data using the
     Gauss--Codazzi--Mainardi equations \eqref{GCM1}--\eqref{GCM2}.

\section{Quantifying deviation from Kerr initial data}
\label{Sec:Deviation}

In the previous section we gave a characterisation of
propagating-type-D initial data sets, namely
\begin{eqnarray*}
\mathcal{H}_{ABCDEF}=\dot{\mathcal{H}}_{ABCDEF}=0, \quad I\neq 0 \qquad \text{on}~\mathcal{S}.
\end{eqnarray*}
The aim of the present section is to identify a class of initial data
for which these conditions are sufficient to single out Kerr initial
data. The class of initial data will be specified in terms of its
asymptotic properties, containing as a strict subset the \emph{boosted
asymptotically Schwarzschildean} data sets considered in similar
works, \cite{BacVal10a, BacVal10b}.

\medskip

A natural approach to singling out the Kerr spacetime would be to
eliminate those Kinnersley metrics, \cite{Kin69}, which are
incompatible with the assumed regularity and asymptotic conditions. A
similar approach was taken for instance in \cite{Val05} to
characterise the Schwarzschild spacetime exploiting \emph{Zakharov's
property}. However, as pointed out in Remark 3 of \cite{Mar99}, a
drawback of such an approach is that the derivation of the Kinnersley
metrics implicitly assumes analyticity, and is therefore overly
restrictive.  Instead, we choose to follow a similar approach to that
of \cite{BacVal10a, BacVal10b}, relying on Mars' characterisations of
the Kerr spacetime among stationary spacetimes \cite{Mar99, Mar00}.
We start by recalling the following result from \cite{BacVal10b}:
\begin{theorem}\label{Thm:JuanThomas}
(Valiente Kroon \& B\"{a}ckdahl, \cite{BacVal10b}) Let $(\mathcal{M},
  g_{ab})$ be a smooth vacuum spacetime satisfying $I \neq 0$ on
  $\mathcal{M}$. Then $(\mathcal{M}, g_{ab})$ is locally isometric to
  the Kerr spacetime if and only if the following conditions are
  satisfied:
\begin{enumerate}[label=(\roman*)]
    \item there exists a Killing spinor, $\pi_{AB}$, such that the
      associated Killing vector, $\eta^{AA'}:=\nabla^{BA'}\pi_B{}^A$,
      is real;
    \item the spacetime $(\mathcal{M}, g_{ab})$ has a stationary
      asymptotically flat 4-end with non-vanishing Komar mass in which
      $\eta^{AA'}$ tends to a time translation.
\end{enumerate}
\end{theorem}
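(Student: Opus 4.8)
The plan is to reduce the statement to Mars' local characterisations of the Kerr metric, \cite{Mar99, Mar00}, using the Killing spinor $\pi_{AB}$ to bridge between the hypotheses (i)--(ii) and the algebraic alignment condition --- vanishing of the Mars--Simon tensor --- required by those theorems. The \emph{only if} direction is a verification on Kerr itself: there the curvature is everywhere type D with $I \neq 0$, so by equation \eqref{KSAnsatz} a Killing spinor $\kappa_{AB} = \Psi^{-1/3}o_{(A}\iota_{B)}$ exists; one checks that the associated Killing vector $\nabla^{BA'}\kappa_B{}^A$ is a constant multiple of the stationary Killing vector and hence real, and that in the standard asymptotically flat end it tends to a time translation with non-vanishing Komar mass equal to the Kerr mass parameter. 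Thus (i) and (ii) hold.

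For the \emph{if} direction I would first position the data within Mars' framework. Since $\eta^{AA'}$ is real it is a genuine Killing vector generating a one-parameter isometry group, and hypothesis (ii) guarantees this symmetry is stationary, asymptotically flat, and of non-vanishing Komar mass --- exactly the structural input Mars requires. I would then record the algebraic consequence of (i): on a vacuum spacetime $\pi_{AB}$ satisfies the Buchdahl constraint \eqref{eq_buchdahl}, so that $\Psi_{ABCD} \propto \pi_{(AB}\pi_{CD)}$; combined with $I \neq 0$ (which forces $\pi_{AB}\pi^{AB}\neq 0$ and rules out the type N and O cases) this makes the spacetime type D with the Weyl spinor \emph{aligned} to $\pi_{AB}$.

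The core of the argument is to convert this alignment, together with the reality of $\eta^{AA'}$, into Mars' condition. I would introduce the self-dual Killing $2$-form of $\eta^{AA'}$, whose symmetric-spinor part $\phi_{AB} \propto \nabla_{(A}{}^{A'}\eta_{B)A'}$ encodes the Ernst potential of the stationary symmetry, and show --- using the Killing spinor equation \eqref{eq_Killing_spinor_eq}, the vacuum Bianchi identity \eqref{Bianchi}, and the definition $\eta^{AA'} = \nabla^{BA'}\pi_B{}^A$ --- that $\phi_{AB}$ is algebraically proportional to $\Psi_{AB}{}^{CD}\pi_{CD}$, and hence to $\pi_{AB}$ itself via the alignment, with the scalar of proportionality fixed by the Ernst potential. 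Substituting into $\Psi_{ABCD} \propto \pi_{(AB}\pi_{CD)}$ then yields precisely the proportionality between the self-dual Weyl tensor and the square of the Ernst $2$-form that defines the vanishing of the Mars--Simon tensor. Here the reality of $\eta^{AA'}$ is essential: it fixes the correct real Ernst normalisation and eliminates the NUT-type parameter that would otherwise obstruct asymptotic flatness, thereby selecting Kerr rather than a more general member of the type D family. With the Mars--Simon tensor shown to vanish, I would invoke \cite{Mar99, Mar00} to conclude that $(\mathcal{M}, g_{ab})$ is locally isometric to Kerr.

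The main obstacle I anticipate is the computation establishing that $\phi_{AB}$ is proportional to $\pi_{AB}$ with the \emph{correct} Ernst-potential-dependent coefficient: this requires commuting the two derivatives hidden in $\eta^{AA'} = \nabla^{BA'}\pi_B{}^A$, applying the Killing spinor equation and the vacuum curvature commutators, and matching the resulting scalar to the Ernst potential so that the output is exactly Mars' normalised condition and not merely a generic type D alignment. A secondary point that must be checked is that the asymptotic and Komar-mass assumptions in hypothesis (ii) coincide with the precise decay hypotheses of \cite{Mar99, Mar00}, rather than a superficially similar but inequivalent class.
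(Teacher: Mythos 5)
This statement is not proved in the paper at all: it is recalled verbatim from \cite{BacVal10b} ("We start by recalling the following result...") and used downstream as a black box, so there is no internal proof to compare against. Judged on its own terms, your sketch is a faithful reconstruction of how the cited reference actually proves it: the Buchdahl constraint plus $I\neq 0$ forces type D with $\Psi_{ABCD}\propto\pi_{(AB}\pi_{CD)}$ and $\pi_{AB}\pi^{AB}\neq 0$; the standard vacuum identity expressing the self-dual Killing form of $\eta^{AA'}$ as $\phi_{AB}\propto \Psi_{ABCD}\pi^{CD}$ then turns the alignment into the vanishing of the Mars--Simon tensor; and hypothesis (ii) supplies exactly the stationary asymptotically flat end with non-vanishing Komar mass that Mars' characterisation \cite{Mar99, Mar00} requires. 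Two small corrections of emphasis: the reality of $\eta^{AA'}$ is doing a more primitive job than you suggest --- for a generic type D vacuum the Killing vector built from the canonical Killing spinor is genuinely \emph{complex} (a complex combination of commuting Killing fields), so reality is what guarantees there is a real Killing vector to which Mars' theorem can be applied at all; it is the asymptotic flatness in (ii), not the reality, that excludes the NUT-like members of the aligned family. Your closing caveats (fixing the Ernst-potential normalisation in the Mars--Simon tensor, and matching the decay class to Mars' hypotheses) are precisely the technical points the original reference attends to, so flagging them is appropriate rather than a gap.
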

\begin{remark}\label{remark:kid_translational}
{\em Given a KID set $(N, Y^i)$ on $\mathcal{U}\subset\mathcal{S}$, a
  Killing vector is obtained ---see \cite{BeiChr97}--- by solving the
  following initial value problem (IVP)
\begin{equation}
    \left\{
\begin{array}{ll}
	 \square X^a = 0 & \qquad
         \text{on}~\mathcal{D}^{+}(\mathcal{U}),\\ X^a =(N, Y^i)
         &\qquad\text{on}~\mathcal{U},\\ N^b\nabla_b X^a = -D^a N +
         K_{b}{}^aY^b &\qquad\text{on}~\mathcal{U}.
\end{array} \right.
\end{equation}

As shown in Appendix \ref{Sec:Boost}, an application of the theory in
\cite{ChrMur81} shows that, if $(N, Y^i) = (A^0, A^i) +
\mathcal{O}_k(r^{-q})$ on $\mathcal{U}$, where $A^0$ and $A^i$ are
constants and the initial data $(\mathcal{S},\bm h,\bm K)$ is
asymptotically Euclidean of order $(k,q)$ with $k>3$ and $q>0$, then
$X^a \rightarrow (A^0, A^i)$ as $r \rightarrow \infty$ in
$\mathcal{D}^+(\mathcal{U})$, i.e. the Killing vector remains
asymptotically translational.  }
\end{remark}

In \cite{BacVal10a, BacVal10b}, the Killing spinor initial data
equations are used to reduce this characterisation of the Kerr
spacetime to the level of an initial dataset $(\mathcal{S}, \bm h, \bm
K)$, this forming the basis of their construction of an invariant
measuring \emph{non-Kerrness}.  The authors consider a class of
initial data which they term \emph{boosted asymptotically Euclidean},
these being a special case (see section \ref{Sec:boostedAS} for the
explicit formulae) of initial data of the following form:
    \begin{eqnarray}
        && h_{ij} = -\left(1 + \frac{2A}{r} \right)\delta_{ij}
      \nonumber \\ && \qquad\quad -
      \frac{2\alpha}{r}\left(\frac{2x_{i}x_{j}}{r^2} - \delta_{ij}
      \right) +
      \mathcal{O}_{k}(r^{-1-q}),\qquad\qquad \label{HuangSolh}\\ &&
      K_{ij} = \frac{\beta}{r^2}\left(\frac{2x_{i}x_{j}}{r^2} -
      \delta_{ij} \right) +
      \mathcal{O}_{k-1}(r^{-2-q}).\label{HuangSolK}
    \end{eqnarray}
Here, $A$ is a constant and $\alpha=\alpha(\theta, \varphi)$,
$\beta=\beta(\theta, \varphi)$ are functions on $\mathbb{S}^2$.
Initial data of this more general form were first discussed in
\cite{Beig87} and later rigorously shown to exist for sufficiently
regular $\alpha, \beta$, in \cite{Hua10}.

\medskip

The approach in \cite{BacVal10a, BacVal10b} relies on solving an
elliptic PDE on $\mathcal{S}$ to construct an ``approximate Killing
spinor" and resulting approximate Killing vector.  We emphasise that,
in contrast to \cite{BacVal10a, BacVal10b}, in the forthcoming
discussion there is no analogous construct of an approximate Killing
spinor or vector, making the characterisation obtained in this paper,
in this sense, \emph{algebraic}.  This is clearly advantageous not
only from the point of view of Mathematical Relativity, but also for
numerical applications where the invariant can be monitored at each
time slice $\mathcal{S}_{t}$ in numerical evolutions of, say, compact
binaries to examine how quickly the final configuration converges to a
member of the Kerr black hole family.

\medskip

 The structure of this section is as follows. In subsection
\ref{Sec:boostedAS}, we first consider the special case of
\emph{boosted asymptotically Schwarzschildean} data sets. In doing so,
we recover a similar result (see Theorem
\ref{Thm:AsymptoticallySchwarzschildean}) to that of \cite{BacVal10a,
  BacVal10b}. In subsection \ref{Sec:more_general}, we prove a more
general result, Theorem \ref{Thm:MoreGeneral}, for a broader class of
data using asymptotic properties of Killing initial data on
asymptotically Euclidean data sets. Although the results of subsection
\ref{Sec:more_general} subsume those of \ref{Sec:boostedAS}, the
approach taken in section \ref{Sec:boostedAS} relies on less
sophisticated machinery and also makes connections with other areas of
the literature, in particular the work of Saez et al \cite{FerSae09};
it is for this reason that we have opted to include the special case
here.

\subsection{A special case: \emph{boosted
asymptotically–Schwarzschildean data}}
\label{Sec:boostedAS}

In this section, we restrict attention to \emph{boosted asymptotically
Schwarzschildean} data sets. These are initial data sets of the form
\eqref{HuangSolh}--\eqref{HuangSolK}, with
    \begin{eqnarray}
         A=m/\sqrt{1-v^2},  \quad \alpha
      = \frac{2m^2+4\nu^2}{(m^2+\nu^2)^{1/2}} - 2A, \quad \beta
      =\frac{E\nu(3m^2+2\nu^2)}{(m^2+\nu^2)^{3/2}}, \label{AandbetaForSchwarzschildean}
    \end{eqnarray}
where $m>0$ and $\vert v\vert < 1$ are constants, and $\nu =
-mv\cos\theta/\sqrt{1-v^2}$. These initial data sets have asymptotics
consistent with a boosted Schwarzschild black hole, with boost vector
given by $v^i\partial_i = v\partial_z$. Here, without loss of
generality we have chosen our asymptotically-Cartesian coordinate
system such that the boost vector is aligned with the $z-$axis, this
being achieved by a rotation of a generic coordinate basis. The more
general (i.e. non-coordinate adapted) form of the metric can be found
in section 6.5 of \cite{BacVal10b}. For such initial data, the ADM
$4-$momentum is given by
\begin{equation}\label{ADMForSchwarz}
    p^a = \frac{m}{\sqrt{1-v^2}}\left(1, ~v^i\right),
\end{equation}
resulting in ADM mass $m_{ADM}=m>0$.

\medskip

In order to be able to apply Theorem \ref{Thm:JuanThomas}, we need to
construct a real, timelike Killing vector. As we have seen above, for
a type D spacetime there is a canonical Killing vector
$\xi^{AA'}$. Therefore, to single out the Kerr spacetime, it would
suffice to show that $\xi^a=\xi^{AA'}$ (or some complex-constant
rescaling, thereof) is real and asymptotically timelike.  Following
Ferrando--Saez \cite{FerSae09} ---see also \cite{Gar16}--- for a type
D spacetime we define
 \[\mathcal{Q}_{abcd} := \mathcal{C}_{abcd} -\tfrac{1}{12}\Psi(g_{ac}g_{bd} - g_{ad}g_{bc} + i\epsilon_{abcd}),\]
where $\Psi$ is the only non-zero component of the Weyl spinor as in
equation \eqref{eq:WeylSpinorTypeDdyad} , $\epsilon_{abcd}$ denotes
the volume form for $g_{ab}$, and $\mathcal{C}_{abcd}$ is the
\emph{self-dual Weyl tensor},
\[\mathcal{C}_{abcd}:=\tfrac{1}{2}(C_{abcd} + iC^*_{abcd}) = \Psi_{ABCD} \epsilon_{A'B'} \epsilon_{C'D'}.\]  
In \cite{FerSae09} it is shown that for a type D spacetime, there
exists a bi-vector $\mathcal{U}_{ab}$ such that
\begin{eqnarray}
    \mathcal{Q}_{abcd} =
    \Psi\mathcal{U}_{ab}\mathcal{U}_{cd}, \label{BivectorDecompOfQ}
\end{eqnarray}
and that, whenever $\Psi\neq 0$,
\begin{eqnarray}
    \xi^b := \tfrac{3}{2}\Psi^{-1/3}\nabla_{a}\mathcal{U}^{ab}
\end{eqnarray}    
defines, in general a complex-valued Killing vector, which moreover
satisfies the identity
\begin{eqnarray}
    \Psi^{-11/3}\mathcal{Q}_{abcd}(\nabla^b \Psi)(\nabla^d \Psi) =
    \xi_a \xi_c. \label{Saez}
\end{eqnarray}
This Killing vector in fact coincides with the canonical Killing
vector $\xi^{AA'}:=\nabla^{BA'}\kappa_{B}{}^A$ ---see Appendix
\ref{Sec:Killingvectorexpression}--- which justifies our choice of
notation.
\begin{remark}
    This construction \eqref{BivectorDecompOfQ}--\eqref{Saez} forms
    the basis of the characterisations of propagating type D and Kerr
    initial data given in \cite{Gar16} (see \cite{Gar15}, also). The
    characterisations given there (c.f. Theorem 6 of \cite{Gar16}),
    require three derivatives of the Weyl curvature (equivalently,
    five derivatives of the metric). Here, our characterisation via
    the invariant $\mathcal{I}$ (c.f. Theorem
    \ref{Coro:TypeDInvariant}, above) requires only one derivative of
    the Weyl curvature (equivalently, three derivatives of the
    metric).
\end{remark}

\noindent More generally, for any asymptotically-Euclidean manifold
(not necessarily of Petrov-type D) satisfying $I\neq 0$, we can define
\begin{equation}
    Q_{ac}:= \psi^{-11/3}\tilde{\mathcal{Q}}_{abcd}(\nabla^b
    \psi)(\nabla^d \psi)
\label{Qdef}
\end{equation}
where $\psi:=-6J/I$ and
\[ \tilde{\mathcal{Q}}_{abcd} := \mathcal{C}_{abcd} -\tfrac{1}{12}\psi(g_{ac}g_{bd} - g_{ad}g_{bc} + i\epsilon_{abcd}). \]
It is clear that if the initial data is type D, then $\psi=\Psi$, in
which case it follows from equation \eqref{Saez} that
\begin{equation}
    Q_{ab}=\xi_a\xi_b. \label{Saez2}
\end{equation}
Our approach here will therefore be to directly compute the
asymptotics of the expression $Q_{ab}$ from the initial data, and to
infer the implied asymptotics of $\xi^a$ from equation
\eqref{Saez2}. We denote the leading $r^{-3}$ components of $E_{ij},
B_{ij}$ as $\mathcal{E}_{ij}, \mathcal{B}_{ij}$, so that
\[ E_{ij} = \mathcal{E}_{ij} + \mathcal{O}_1(r^{-3-q}), \quad B_{ij} = \mathcal{B}_{ij} + \mathcal{O}_1(r^{-3-q}).\]
Using equations \eqref{AsympE}--\eqref{AsympB}, we find
\begin{align}
     \mathcal{E}_{ij}dx^{i}dx^{j} &=
     -\frac{m}{r^3}\frac{(1-v^2)^{3/2}(2+v^2\sin^2\theta)}{(1-v^2\sin^2\theta)^{5/2}}
     dr^2+\frac{m}{r}\left(\frac{1-v^2}{1-v^2\sin^2\theta}\right)^{3/2}
     d\theta^2 \nonumber\\ &+
     \frac{m}{r}\frac{(1-v^2)^{3/2}(1+2v^2\sin^2\theta)\sin^2\theta}{(1-v^2\sin^2\theta)^{5/2}}d\varphi^2, \label{EForSchwarz}\\ \mathcal{B}_{ij}dx^{i}dx^{j}
     &= -\frac{6mv}{r^2}
     \frac{(1-v^2)^{3/2}\sin^2\theta}{(1-v^2\sin^\theta)^{5/2}}drd\varphi. \label{BForSchwarz}
\end{align}
Note that from equation \eqref{EvolutionaryBianchiInTensors} we have
that
\begin{equation}
    \dot{\Psi}_{ij} = i\text{rot}_2(\Psi)_{ij} +
    \mathcal{O}(r^{-4-q}), \label{PsiDotEquation}
\end{equation}
where $\text{rot}_2: \text{Sym}^2(T^*\mathcal{S})\rightarrow
\text{Sym}^2(T^*\mathcal{S})$ is defined as
\[ \text{rot}_2(\Psi)_{ij} = \mathring{\epsilon}^{kl}{}_{(i} \partial_{\vert k\vert}\Psi _{j)l},\]
with $\mathring{\epsilon}_{ijk}$ denoting the Levi--Civita tensor for
the flat metric and with index-raising performed with respect to the
inverse of the flat metric.  Together, \eqref{PsiDotEquation} and
\eqref{eq:RotB}--\eqref{eq:RotE} yield asymptotic expansions for
$\dot{\mathcal{E}}_{ij}, \dot{\mathcal{B}}_{ij}$. Combining all of the
above,
\begin{eqnarray}
    && I = \frac{6 m^2}{r^6} \left(\frac{1-v^2}{1 -
    v^2\sin^2\theta}\right)^3 +
  \mathcal{O}(r^{-6-q}), \label{IforAsympSchwarz}\\ && J = \frac{6
    m^3}{r^9} \left(\frac{1-v^2}{1 - v^2\sin^2\theta}\right)^{9/2} +
  \mathcal{O}(r^{-9-q}), \\ && \dot{I} =
  2(\dot{\mathcal{E}}^{ij}\mathcal{E}_{ij} -
  \dot{\mathcal{B}}^{ij}\mathcal{B}_{ij}) +
  \mathcal{O}(r^{-7-q})\nonumber\\ && \quad =\frac{36 m^2 v}{r^7}
  \frac{(1 - v^2)^3 \cos\theta}{(1 - v^2\sin^2\theta)^4 } +
  \mathcal{O}(r^{-7-q}),\\ && \dot{J} =
  3\dot{\mathcal{E}}_{i}{}^{j}\mathcal{E}_{j}{}^{k}\mathcal{E}_{k}{}^{i}
  -
  3\dot{\mathcal{E}}_{i}{}^{j}\mathcal{B}_{j}{}^{k}\mathcal{B}_{k}{}^{i}
  -
  6\mathcal{E}_{i}{}^{j}\mathcal{B}_{j}{}^{k}\dot{\mathcal{B}}_{k}{}^{i}
  + \mathcal{O}(r^{-10-q}) \nonumber \\ &&\quad =
  \frac{54m^3v}{r^{10}}\frac{(1-v^2)^{9/2}\cos\theta}{(1-v^2\sin^2\theta)^{11/2}}+
  \mathcal{O}(r^{-10-q}).\label{JdotforAsympSchwarz}
\end{eqnarray}
Note that $I, J, \dot{I}, \dot{J}$ are all real-valued to leading
order.
\begin{remark}
Observe that the algebraically special condition $I^3-6J^2=0$ holds
asymptotically to first order; indeed, the stronger condition
\[\mathcal{H}_{ijk}=\mathcal{O}(r^{-9-q})\]
can be shown to hold ---that is to say that the leading-order
($r^{-9}$) term of $\mathcal{H}_{ijk}$ vanishes--- consistent with the
initial data being asymptotically type D.
\end{remark}
\noindent 
Equations \eqref{IforAsympSchwarz}--\eqref{JdotforAsympSchwarz} then
give
\begin{eqnarray}
     && \psi = -6J/I = -\frac{6 m}{r^3}
  \left(\frac{1-v^2}{1 - v^2\sin^2\theta}\right)^{3/2} +
  \mathcal{O}(r^{-3-q}), \label{AsymptoticPsi}\\ && \dot{\psi} =
  6(J\dot{I} - I\dot{J})/I^2= \frac{18 m^2
    v}{r^4} \frac{(1 - v^2)^{3/2} \cos\theta}{(1 -
    v^2\sin^2\theta)^{5/2} } + \mathcal{O}(r^{-4-q}).\qquad\qquad
    \label{AsymptoticPsiDot}
\end{eqnarray}
Formula \eqref{AsymptoticPsi} is consistent with the following
expression for the Kerr spacetime
\[\Psi = -\frac{6m}{(r-ia\cos\theta)^3} = -\frac{6m}{r^3} + \mathcal{O}(r^{-4}),\]
given in terms of Boyer--Lindquist coordinates, and where $a$ denotes
the angular momentum ---see Chapter 21 of \cite{SteMacHer80}, for
example. Substitution of
\eqref{AsymptoticPsi}--\eqref{AsymptoticPsiDot} into the $3+1$
decompositions of $Q_{ab}$ ---see equation \eqref{NormalTimesQ} from
the Appendix--- then gives
\begin{align}
    & N^a N^b Q_{ab} = \left(\frac{9}{16m}\right)^{2/3}\frac{1}{1-v^2}
  + \mathcal{O}(r^{-q}), \label{NNQ} \\ & N^a Q_{a i} dx^i =
  \left(\frac{9}{16m}\right)^{2/3}\frac{v}{1-v^2} dz +
  \mathcal{O}(r^{-q}).\label{NQdxi}
\end{align}
Therefore, if $(\mathcal{S}, \bm h, \bm K)$ is propagating-type-D then
equation \eqref{Saez} implies that the associated Killing vector
$\xi^a$ has lapse and shift parts $(\xi_ {N},\xi^i)$ determined by
equation \eqref{Saez2}:
\[
\xi_N^2=N^aN^bQ_{ab}, \qquad \xi_N\xi_i dx^i=Q_{ai}N^a dx^i.
\]
Hence, from equations \eqref{NNQ}--\eqref{NQdxi} we conclude that
$\xi^a$ is a real-valued asymptotically-translational Killing vector,
given up to a possible overall sign by
\begin{equation}
\label{eq:xiAsymptoticsForSchwarz}
\xi^a = \left(\frac{3}{4m^2}\right)^{2/3}p^a + \mathcal{O}(r^{-q})
\quad \text{on}\quad \mathcal{S}.
\end{equation}
Note that we recover the result from \cite{BacVal10a, BacVal10b} that
$\xi^a \propto p^a$ at spatial infinity. In other words, $\xi^a$ is
real and asymptotes to a time translation.
\medskip

The discussion of this subsection leads to the following result, which
should be compared with Theorem 28 of \cite{BacVal10b}:
\begin{theorem}\label{Thm:AsymptoticallySchwarzschildean}
    Let $(\mathcal{S}, \bm h, \bm K)$ be a smooth boosted
    asymptotically-Schwarzschildean initial dataset 
    \eqref{AandbetaForSchwarzschildean}, of
    order $(k,q)$ where $k\geq 4$, with two asymptotically-Euclidean
    ends and satisfying
    \begin{enumerate}[label=(\roman*)]
        \item $I\neq 0$ on $\mathcal{S}$,
        \item $\psi:=-6J/I$ admits a smooth globally-defined cube root
          over $\mathcal{S}$.
    \end{enumerate}
    Then $\mathcal{I}(\mathcal{S}, \bm h, \bm K)=0$ if and only if
    $(\mathcal{S}, \bm h, \bm K)$ is locally an initial data set for
    the Kerr spacetime.
\end{theorem}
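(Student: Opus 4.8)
The plan is to deduce the result from the spacetime characterisation of Kerr, Theorem~\ref{Thm:JuanThomas}, using the propagating-type-D invariant as the bridge. The reverse implication is immediate: if $(\mathcal{S},\bm h,\bm K)$ is locally Kerr data then its development is the Kerr spacetime, which is everywhere Petrov type D with $I\neq 0$, so the data is propagating-type-D and Theorem~\ref{Coro:TypeDInvariant} gives $\mathcal{I}=0$. For the forward implication, Theorem~\ref{Coro:TypeDInvariant} tells us that $\mathcal{I}=0$ is \emph{equivalent} to the data being propagating-type-D, so it suffices to show that a boosted asymptotically-Schwarzschildean propagating-type-D data set obeying (i)--(ii) is locally Kerr data.

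Assume then that $\mathcal{I}=0$. By Theorems~\ref{Coro:TypeDInvariant} and~\ref{thm:PetrovTypeD-detect} the development is Petrov type D on a neighbourhood of $\mathcal{S}$ (in particular $\Psi\neq 0$), and the construction in Proposition~\ref{Prop:Characterisation} produces a Killing spinor $\kappa_{AB}$ on that neighbourhood from the intrinsic candidate $\varkappa_{AB}=\Psi^{-1/3}o_{(A}\iota_{B)}$. Here is where hypothesis (ii) enters: the cube root $\Psi^{-1/3}$ is only defined locally, and on a manifold with two ends there may be monodromy; assuming $\psi=-6J/I\,(=\Psi)$ has a smooth \emph{global} cube root makes $\varkappa_{AB}$, hence $\kappa_{AB}$ and its associated (a priori complex) Killing vector $\xi^{AA'}=\nabla^{BA'}\kappa_B{}^A$, globally single-valued. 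To invoke Theorem~\ref{Thm:JuanThomas} it then remains to show that, up to a real constant rescaling, $\xi^a$ is \emph{real} and tends to a time translation at infinity, the relevant end being stationary asymptotically flat with non-vanishing Komar mass.

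The asymptotics of $\xi^a$ are obtained algebraically---without solving any PDE---through the Ferrando--Saez identity \eqref{Saez2}, $Q_{ab}=\xi_a\xi_b$. Feeding the boosted-Schwarzschildean expansions \eqref{EForSchwarz}--\eqref{BForSchwarz} into $I,J,\dot I,\dot J$, and thence into $\psi,\dot\psi$ as in \eqref{AsymptoticPsi}--\eqref{AsymptoticPsiDot}, I would substitute into the $3+1$ split of $Q_{ab}$ to obtain \eqref{NNQ}--\eqref{NQdxi}. Since $N^aN^bQ_{ab}=\xi_N^2$ and $N^aQ_{ai}=\xi_N\xi_i$, these fix the lapse and shift of $\xi^a$ to leading order and yield \eqref{eq:xiAsymptoticsForSchwarz}, namely $\xi^a=(3/4m^2)^{2/3}p^a+\mathcal{O}(r^{-q})$ with $p^a$ the ADM $4$-momentum \eqref{ADMForSchwarz}. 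As $p^a$ is real and future-timelike ($p^ap_a=m^2>0$) and $m_{ADM}=m>0$ supplies the non-vanishing Komar mass and the stationary asymptotically flat end, $\xi^a$ is, after rescaling, asymptotically real and asymptotically translational.

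The main obstacle is promoting \emph{asymptotic} reality to exact reality, since Theorem~\ref{Thm:JuanThomas}(i) requires a genuinely real Killing vector. Because the Killing equation has real coefficients, $\bar\xi^a$ is also a Killing vector, so $\mathrm{Im}\,\xi^a$ is a real Killing vector of the development; by \eqref{eq:xiAsymptoticsForSchwarz} its initial data (lapse and shift) decay as $\mathcal{O}(r^{-q})$ at each end. The plan is to invoke a rigidity statement for asymptotically Euclidean data---a Killing vector whose lapse and shift vanish at spatial infinity vanishes identically---to conclude $\mathrm{Im}\,\xi^a\equiv 0$, i.e.\ $\xi^a$ is real. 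Granting this, $\xi^a$ is a real Killing vector asymptotic to the timelike translation $p^a$ at a stationary asymptotically flat end of non-vanishing Komar mass, so all hypotheses of Theorem~\ref{Thm:JuanThomas} hold and the development is locally isometric to Kerr; equivalently $(\mathcal{S},\bm h,\bm K)$ is locally Kerr initial data.
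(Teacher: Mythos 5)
Your proposal is correct and follows essentially the same route as the paper: Theorem~\ref{Coro:TypeDInvariant} upgrades $\mathcal{I}=0$ to propagating-type-D data, hypothesis (ii) globalises the Killing spinor $\Psi^{-1/3}o_{(A}\iota_{B)}$ and its associated Killing vector, the Ferrando--Saez identity $Q_{ab}=\xi_a\xi_b$ yields the asymptotics \eqref{eq:xiAsymptoticsForSchwarz} algebraically, and Theorem~\ref{Thm:JuanThomas} (with the Komar mass equal to $m>0$) closes the argument. The one place you go beyond the paper's own proof of this theorem is the reality step: the paper reads reality of $\xi^a$ off the leading-order expressions \eqref{NNQ}--\eqref{NQdxi} somewhat informally, whereas you kill $\mathrm{Im}\,\xi^a$ exactly via the Beig--Chru\'sciel rigidity statement (Proposition~\ref{KVasympt-Lambda}), which is precisely the argument the paper itself deploys later in the more general Theorem~\ref{Thm:MoreGeneral}.
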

\begin{proof}
    The ``if" direction is immediate, since the Kerr spacetime is type
    D, implying that $\mathcal{I}(\mathcal{S}, \bm h, \bm K)=0$. For
    the ``only if" direction, assumption (i) and
    $\mathcal{I}(\mathcal{S}, \bm h, \bm K)=0$ imply that the local
    spacetime development is type D, by Theorem
    \ref{Coro:TypeDInvariant}. Hence, $\Psi_{ABCD}=\Psi
    o_{(A}o_{B}\iota_{C}\iota_{D)}$ for some
    $\Psi:\mathcal{S}\rightarrow\mathbb{C}$. Noting that $\Psi=\psi$
    for type D, assumption (ii) then implies that there is a
    globally-defined smooth Killing spinor,
    $\kappa_{AB}=\Psi^{-1/3}o_{(A}\iota_{B)}$, and a globally defined
    Killing vector field $\xi^a$, over $\mathcal{S}$. Now, $\xi^a$ is
    proportional to the ADM $4-$momentum at infinity ---see equation
    \eqref{eq:xiAsymptoticsForSchwarz}. Since the ADM $4$-momentum is
    timelike (see equation \eqref{ADMForSchwarz}), it follows that
    $\xi^a$ tends to a time translation as $r\rightarrow \infty$.
    Note also that the Komar mass associated to $\xi^a$ coincides with
    the ADM mass $m$ (see \cite{Ashetkar84}, for example), which is
    positive by assumption. Hence, we can apply Theorem
    \ref{Thm:JuanThomas} with $\pi_{AB}=\kappa_{AB}$ and
    $\eta^{AA'}=\xi^{AA'}$, implying that $(\mathcal{S}, \bm h, \bm
    K)$ is locally isometric to initial data for the Kerr spacetime.
\end{proof}
\begin{remark}
Note that assumption (i) is sufficient to guarantee that
$\kappa_{AB}=\Psi^{-1/3}o_{(A}\iota_{B)}$ is well-defined on any
contractible open subset $\mathcal{U}\subset\mathcal{S}$. However,
assumption (ii) is needed here to ensure that $\kappa_{AB}$, and hence
$\xi_{AA'}$, are globally-defined over $\mathcal{S}$. No such
assumption is required in \cite{BacVal10a, BacVal10b}, since the
construction is fundamentally a \emph{global} one ---the method is
based on the construction of an ``approximate Killing spinor" as the
solution to an elliptic PDE over $\mathcal{S}$.
\end{remark}

\subsection{The more general case}
\label{Sec:more_general}

In this section, we will generalise Theorem
\ref{Thm:AsymptoticallySchwarzschildean} to a broader class of initial
data for which the property $\xi^a\propto p^a$ necessarily holds at
infinity. Our approach is similar to that of Theorem 5.1 of
\cite{AndBacBlu16}.  \\

We begin with the following two results, based on Propositions 2.2 and
3.1 of \cite{BeigChr96}:
\begin{proposition}\label{KVasympt-Lambda}
(Beig \& Chru\'{s}ciel, \cite{BeigChr96}) Let $(\mathcal{S},\bm h, \bm
  K)$ be an asymptotically Euclidean initial data set of order $(k,q)$
  with $k\geq 2$ and let $x^i$ denote asymptotically Cartesian
  coordinates. If $(N, Y^i)$ is an asymptotically KID set with $N, Y^i
  \in C^2$, then there exist constants $\Lambda_{ij}=\Lambda_{[ij]}$
  such that:
     \begin{equation}
        Y^i -\Lambda_{ij}x^i=\mathcal{O}_{k}(r^{1-q}), \quad
        N+\Lambda_{0i}x^i=\mathcal{O}_{k}(r^{1-q}).
    \end{equation}
Further,
    \begin{enumerate}[label=(\roman*)]
        \item If $\Lambda_{ij}=0$, then there exist constants $A^i$
          such that
        \begin{equation}
            Y^i-A^i =\mathcal{O}_{k}(r^{-q}), \quad N-A^0
            =\mathcal{O}_{k}(r^{-q})\label{AsymptoticallyTranslational}
        \end{equation}
        \item If $\Lambda_{ij}=A^i=A^0=0$, then $Y^i=N=0$.
    \end{enumerate}
    In case (i), we say that $(N,Y^i)$ is an \emph{asymptotically
    translational} KID set.
\end{proposition}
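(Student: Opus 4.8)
The plan is to treat the two KID equations separately and to recover the asymptotic behaviour of $N$ and $Y^i$ by integrating their leading-order forms inward from spatial infinity. Recall that, in vacuum, a pair $(N,Y^i)$ is a KID set precisely when it satisfies the spatial Killing equation together with a second-order equation for the lapse, of the schematic form
\[ D_{(i}Y_{j)} = -N K_{ij}, \qquad D_i D_j N = \mathcal{L}_Y K_{ij} + N\,\mathfrak{S}_{ij}, \]
where $\mathfrak{S}_{ij}$ is a curvature-and-$K$ expression whose leading part is $r_{ij}$; for an \emph{asymptotically} KID set these hold up to errors decaying faster than the terms displayed below. Inserting the asymptotic conditions $h_{ij} = -\delta_{ij} + \mathcal{O}_k(r^{-q})$ and $K_{ij} = \mathcal{O}_{k-1}(r^{-1-q})$, the connection coefficients are $\mathcal{O}(r^{-1-q})$ and the Ricci tensor is $\mathcal{O}(r^{-2-q})$, so that the right-hand sides are governed by the products $N K_{ij}$, $\mathcal{L}_Y K_{ij}$ and $N\,r_{ij}$. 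First I would establish the mild a priori fact that $N, Y^i$ grow at most polynomially, which can be bootstrapped directly from the two equations and is what makes the subsequent integration legitimate.

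The core of the argument is then the flat-space Killing identity
\[ \partial_i \partial_j Y_k = \partial_i S_{jk} + \partial_j S_{ik} - \partial_k S_{ij}, \qquad S_{ij} := \partial_{(i}Y_{j)}. \]
Since $S_{ij} = -N K_{ij}$ up to connection corrections, it inherits the decay of $N K_{ij}$, namely $\mathcal{O}(r^{-q})$ once the at-most-linear growth of $N$ is known; the identity then gives $\partial_i\partial_j Y_k = \mathcal{O}(r^{-1-q})$. Integrating twice along radial rays from infinity — the integrals converging because $0<q<1$ — shows that $\partial_j Y_k$ tends to a constant matrix whose symmetric part is $\lim S_{ij}=0$, whence $Y^i = \Lambda_{ij}x^j + \mathcal{O}(r^{1-q})$ with $\Lambda_{ij}=\Lambda_{[ij]}$. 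Applying the same two-fold integration to $\partial_i\partial_j N = \mathcal{O}(r^{-1-q})$ yields $N = -\Lambda_{0i}x^i + \mathcal{O}(r^{1-q})$, with $-\Lambda_{0i} := \lim\partial_i N$; this is the first assertion.

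For part (i), the hypothesis $\Lambda_{ij}=0$ removes the linear growth of $Y^i$. The delicate point — and the step I expect to be the main obstacle — is to show that the boost parameter is then forced to vanish, $\Lambda_{0i}=0$, so that $N$ is bounded rather than linearly growing. Once this is secured, $N K_{ij}$, $\mathcal{L}_Y K_{ij}$ and $N\,r_{ij}$ all improve to $\mathcal{O}(r^{-1-q})$, and re-running the two integrations upgrades the remainders from $\mathcal{O}(r^{1-q})$ to $\mathcal{O}(r^{-q})$, producing constants $A^i,A^0$ with $Y^i-A^i, N-A^0 = \mathcal{O}_k(r^{-q})$. Establishing $\Lambda_{0i}=0$ is exactly where the two KID equations must be used in tandem: a leading-order analysis of either equation alone is blind to the boost, since the Hessian of a linear function vanishes. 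This is the substance of Beig and Chru\'{s}ciel's Proposition 3.1 and relies on the coupling between the evolution equation for $N$ and the momentum-type constraint data.

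Finally, part (ii) is a rigidity statement: if $\Lambda_{ij}=A^i=A^0=0$ then, by (i), $(N,Y^i)\to 0$ at infinity, and I would close the argument through a unique-continuation/uniqueness property of the overdetermined KID system, whose only solution decaying at spatial infinity is the trivial one. Concretely, feeding the decay $N,Y^i=\mathcal{O}(r^{-q})$ back into the equations improves it iteratively, and the resulting fast decay, combined with the analytic (Carleman-type) rigidity of the KID operator, forces $N\equiv Y^i\equiv 0$. This last rigidity is the genuinely hard analytic ingredient; by contrast, the integration-from-infinity scheme underlying the main statement and the decay improvement in (i) is comparatively routine.
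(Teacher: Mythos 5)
First, a point of reference: the paper offers no proof of this proposition at all --- it is imported from Beig \& Chru\'{s}ciel (Propositions 2.2 and 3.1 of \cite{BeigChr96}), so your sketch can only be measured against that source. Your treatment of the first assertion (a priori growth control, the flat Killing identity for $\partial_i\partial_j Y_k$, and twofold radial integration from infinity) is the right outline and matches theirs in spirit. The genuine gap is in your part (i). You read the hypothesis as the vanishing of only the spatial rotation parameters $\Lambda_{ij}$, and you make your central ``delicate point'' the claim that $\Lambda_{ij}=0$ forces the boost parameters $\Lambda_{0i}$ to vanish. That implication is false, so no argument can close this step: exactly flat data $(\mathbb{R}^3,\,h_{ij}=-\delta_{ij},\,K_{ij}=0)$ satisfies the asymptotic hypotheses for any $k$ and $q$, and the pair $(N,Y^i)=(x^1,0)$ solves the KID equations there (it is the restriction to the $t=0$ slice of a Minkowski boost Killing vector), with $\Lambda_{ij}=0$, $\Lambda_{0i}\neq 0$ and unbounded lapse. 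The correct reading of case (i) --- and what Beig--Chru\'{s}ciel actually assume --- is that the \emph{full} antisymmetric matrix $\Lambda_{\mu\nu}$, boosts included, vanishes; the content of (i) is then precisely the decay-improvement bootstrap you describe, with no hidden obstruction. Your attribution of the boost-vanishing claim to their Proposition 3.1 is also misplaced: that proposition is the statement quoted as Proposition \ref{KVAsymptotics} in this paper (a nontrivial asymptotically translational KID is proportional to the ADM four-momentum, under $E>0$), and it is there, not in case (i), that the coupling of the two KID equations to the energy-momentum enters.

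A lesser but still real defect is part (ii). The appeal to Carleman-type unique continuation is unsubstantiated and unnecessary, and your iterative decay improvement alone can never yield exact vanishing --- it only produces decay faster than any fixed rate. The standard argument is elementary: the KID equations prolong to a first-order linear ODE system for the jet $(N,\,\partial_iN,\,Y_i,\,\partial_{[i}Y_{j]})$ along curves, with coefficient matrix of size $\mathcal{O}(r^{-1-q})$, which is integrable along radial rays; a Gronwall estimate integrated from infinity then forces a jet tending to zero at infinity to vanish identically on the exterior region, hence everywhere. So the rigidity you flag as ``the genuinely hard analytic ingredient'' is in fact ODE uniqueness, while the genuinely delicate steps are the a priori growth bound you assert without proof and the correct bookkeeping of which $\Lambda$'s are assumed to vanish.
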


\begin{proposition}\label{KVAsymptotics}
(Beig \& Chru\'{s}ciel, \cite{BeigChr96}) Let $(\mathcal{S},\bm h, \bm
  K)$ be an asymptotically Euclidean initial data set of order $k\geq
  2$, $q>1/2$ and ADM $4-$momentum $p^a=(E,p^i)$ with $E>0$. Let $(N,
  Y^i)\in C^1$ be a non-trivial asymptotically translational KID set
  on $(\mathcal{S}, \bm h, \bm K)$. Then,
\[ (N, Y^i) = c(E, p^i)\]
for some constant $c\neq 0$.
\end{proposition}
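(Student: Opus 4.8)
The plan is to identify the asymptotic translation carried by the KID with a conserved Hamiltonian charge at spatial infinity, and then to use conservation of the centre-of-mass (boost) charges under the exact symmetry, together with positivity of the ADM $4$-momentum, to force this translation to be aligned with $p^a$. First I would apply Proposition \ref{KVasympt-Lambda}(i) to the non-trivial asymptotically translational KID set $(N, Y^i)$ to obtain constants $(A^0, A^i)$ with $N - A^0 = \mathcal{O}_k(r^{-q})$ and $Y^i - A^i = \mathcal{O}_k(r^{-q})$; these are precisely the components of the asymptotic translation $A^a = (A^0, A^i)$ to which the generated spacetime Killing vector $\xi^a$ tends. The goal is then to show $(A^0, A^i) = c(E, p^i)$ for some $c \neq 0$, which is the stated conclusion once $(N, Y^i)$ is read as its leading asymptotic value.

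Next I would set up the Regge--Teitelboim/Beig--\'{O} Murchadha boundary integral $\mathcal{H}_r[N, Y]$ over the coordinate spheres $S_r$, whose integrand is built from $(N, Y^i)$ and the asymptotic data $(\bm h, \bm K)$. Using the KID equations --- equivalently, that $(N,Y^i)$ lies in the kernel of the adjoint linearised constraint operator --- one obtains a divergence identity showing that $\mathcal{H}_r[N,Y]$ is independent of $r$ throughout the asymptotic region. Passing to the limit $r \to \infty$ with $N \to A^0$ and $Y^i \to A^i$ then reduces $\mathcal{H}[N,Y]$ to the linear pairing $A^0 E - A^i p_i$ of the asymptotic translation with the ADM $4$-momentum. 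In parallel I would track the boost (centre-of-mass) charges and their evolution along the flow of $\xi^a$.

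The hard part is upgrading this single scalar pairing to the full vector proportionality, and here the exactness of the symmetry is essential: since $\xi^a$ is a genuine Killing field, the flow it generates is an isometry, so every conserved charge --- including the centre-of-mass integrals --- is frozen along the flow. The resulting conservation law takes the form $A^0 p^i - A^i E = 0$, which is the relativistic statement that a translational symmetry of an isolated system must be tangent to its centre-of-mass worldline, whose $4$-velocity is proportional to $p^a$. This is where the rigid positive-energy input of \cite{BeigChr96} enters, both to guarantee that $p^a$ is timelike and future-pointing (so that the centre-of-mass frame is well defined) and to exclude degenerate configurations.

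Finally, non-triviality of the KID together with $E>0$ forces $A^0 \neq 0$: if $A^0 = 0$ then $A^i E = 0$ gives $A^i = 0$, and Proposition \ref{KVasympt-Lambda}(ii) would return the trivial KID, a contradiction. Setting $c := A^0/E \neq 0$, the relation $A^0 p^i = A^i E$ yields $(A^0, A^i) = c(E, p^i)$, as required. I expect the bookkeeping of sign conventions in the charge integrals, and the careful invocation of the rigidity case of the positive-energy theorem to secure the boost conservation law, to be the only genuinely delicate points; the asymptotic matching in the earlier steps is routine given the fall-off rates already recorded in the excerpt.
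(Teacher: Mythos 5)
You should first note that the paper itself does not prove this proposition: it is quoted (as Proposition 3.1) from Beig--Chru\'{s}ciel \cite{BeigChr96}, so the only meaningful comparison is with the proof given there. Your overall skeleton is the right one --- reduce to the asymptotic translation $A^a=(A^0,A^i)$ via Proposition \ref{KVasympt-Lambda}(i), establish the boost identity $A^0p^i-A^iE=0$, then conclude by the dichotomy $A^0\neq 0$ (giving $c=A^0/E$) versus $A^0=0$ (forcing $A^a=0$ and hence a trivial KID by Proposition \ref{KVasympt-Lambda}(ii), a contradiction). Your reading of the conclusion as a statement about the leading asymptotic value of $(N,Y^i)$ is also the intended one. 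The identity $A^{[a}p^{b]}=0$ is indeed the heart of the Beig--Chru\'{s}ciel result, and your final algebraic step is correct.

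The genuine gap is in how you propose to obtain that identity. You derive it from conservation of the centre-of-mass (boost) charges along the flow of the Killing field. But under the asymptotic conditions in force here --- $h_{ij}+\delta_{ij}=\mathcal{O}_{k}(r^{-q})$, $K_{ij}=\mathcal{O}_{k-1}(r^{-1-q})$ with $0<q<1$ and no parity (Regge--Teitelboim) conditions --- the centre-of-mass integrals are in general divergent: they are simply not available as conserved quantities, and the evolution law $\dot{C}^i\propto A^0p^i-A^iE$ that you invoke is itself a theorem requiring precisely those stronger asymptotics. So the step ``every conserved charge, including the centre-of-mass integrals, is frozen along the flow'' presupposes objects that need not exist for the class of data in the hypotheses. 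Beig--Chru\'{s}ciel avoid this entirely: they obtain $A_{[a}p_{b]}=0$ by a direct asymptotic analysis of the KID equations (working with the Killing development of the data), never invoking boost charges. Two smaller points: your Regge--Teitelboim pairing $A^0E-A^ip_i$ is set up but never used in reaching the conclusion, so it can be deleted; and the appeal to the ``rigidity case of the positive-energy theorem'' is misplaced --- the argument needs only $E>0$, not timelikeness or future-pointing of $p^a$, and in \cite{BeigChr96} the logical order is the reverse: the rigid positive energy theorem is an \emph{application} of this proposition, not an ingredient of its proof.
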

Observe that Proposition \ref{KVasympt-Lambda} implies that in the
case $\Lambda_{ij} \neq 0$ the KID set $\xi^a :=(N,Y^i)$ has the
asymptotic behaviour\footnote{Here and in what follows, by ``$F\sim
r^k$" we mean that $F(r, \theta, \varphi) = f(\theta, \varphi)r^k +
o(r^k)$, for some function $f(\theta, \varphi)\not\equiv 0$, as
$r\rightarrow \infty$. } $\xi^a \sim r$, while in the $\Lambda_{ij}=0$
case one has $\xi^a \sim r^0$.  As a result, the only KID sets which
are bounded as $r\rightarrow \infty$, on initial data satisfying the
assumptions of Proposition \ref{KVasympt-Lambda}, are either trivial,
$(N, Y^i)=0$, or \emph{asymptotically translational}, case (i).  \\

We now give the main result of this section:
\begin{theorem}
\label{Thm:MoreGeneral}
    Let $(\mathcal{S}, \bm h, \bm K)$ be a smooth initial
    asymptotically-Euclidean dataset of order $(k,q)$ where $k\geq 4$,
    $q>1/2$, with two ends,
    and satisfying
    \begin{enumerate}[label=(\roman*)]
        \item $I\neq 0$ on $\mathcal{S}$,
        \item $\psi:=-6J/I$ admits a smooth globally-defined cube root
          over $\mathcal{S}$.
    \end{enumerate}
    Then $\mathcal{I}(\mathcal{S}, \bm h, \bm K)=0$ if and only if
    $(\mathcal{S}, \bm h, \bm K)$ is locally an initial data set for
    the Kerr spacetime.
    \end{theorem}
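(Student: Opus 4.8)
The plan is to follow the same overall strategy as in the proof of Theorem \ref{Thm:AsymptoticallySchwarzschildean}, replacing the explicit asymptotic computation of $\xi^a$ — available only for the boosted-Schwarzschildean profiles — by the general control of KID sets afforded by Propositions \ref{KVasympt-Lambda} and \ref{KVAsymptotics}. The ``if'' direction is immediate, since Kerr is type D and hence $\mathcal{I}(\mathcal{S},\bm h, \bm K)=0$ by Theorem \ref{Coro:TypeDInvariant}. For the converse, I would first run the argument of Theorem \ref{Coro:TypeDInvariant} to deduce from $\mathcal{I}(\mathcal{S},\bm h, \bm K)=0$ that $\mathcal{H}_{ABCDEF}=\dot{\mathcal{H}}_{ABCDEF}=0$; combined with hypothesis (i), Theorem \ref{thm:PetrovTypeD-detect} then makes the development Petrov type D near $\mathcal{S}$, so $\psi=\Psi$ there, and hypothesis (ii) upgrades $\kappa_{AB}=\Psi^{-1/3}o_{(A}\iota_{B)}$ to a smooth, globally-defined Killing spinor on $\mathcal{S}$ with an associated (generically complex) Killing vector $\xi^{AA'}=\nabla^{B A'}\kappa_B{}^A$ defined over all of $\mathcal{S}$.

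Next I would analyse the asymptotics of $\xi^a$. Since $m_{ADM}>0$, the leading Weyl curvature at each end should be Coulombic of order $r^{-3}$, giving $\psi=\Psi\sim r^{-3}$ with non-vanishing coefficient (compare \eqref{AsymptoticPsi}); then $\Psi^{-1/3}\sim r$, so $\kappa_{AB}\sim r$ and $\xi^a\sim r^0$ is bounded and, crucially, non-trivial. Writing $\xi^a=\xi_R^a+i\,\xi_I^a$ with $\xi_R^a,\xi_I^a$ real, I would note that a complex spacetime Killing vector restricts to a complex KID set whose real and imaginary parts are separately real KID sets; thus $\xi_R^a$ and $\xi_I^a$ are two bounded real KID sets on $(\mathcal{S},\bm h, \bm K)$.

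From boundedness, the matrix $\Lambda_{ij}$ of Proposition \ref{KVasympt-Lambda} must vanish for each of $\xi_R^a,\xi_I^a$ (else they would grow like $r$), so both are asymptotically translational; Proposition \ref{KVAsymptotics}, applied to whichever part is non-trivial, then gives real constants $a,b$, not both zero, with $\xi_R^a\to a\,p^a$ and $\xi_I^a\to b\,p^a$. The real KID set $b\,\xi_R^a-a\,\xi_I^a$ is bounded with vanishing asymptotic translation, so case (ii) of Proposition \ref{KVasympt-Lambda} forces it to vanish identically; hence $\xi_R^a$ and $\xi_I^a$ are proportional and $\xi^a=(a+ib)\,\zeta^a$ for a single real Killing vector $\zeta^a$ with $\zeta^a\to p^a$. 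By linearity of the Killing spinor equation, $\pi_{AB}:=(a+ib)^{-1}\kappa_{AB}$ is again a Killing spinor, now with the \emph{real} associated Killing vector $\eta^{AA'}:=\zeta^{AA'}$. As $m_{ADM}=\sqrt{p_ap^a}>0$ with $E>0$, the vector $p^a$ is future-timelike, so $\zeta^a$ tends to a time translation, and its Komar mass — determined by the positive ADM mass — is non-vanishing. Both hypotheses of Theorem \ref{Thm:JuanThomas} are then satisfied, and I would conclude that $(\mathcal{S},\bm h, \bm K)$ is locally isometric to Kerr initial data.

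I expect the main obstacle to be the asymptotic step of the second paragraph. For the general asymptotically-Euclidean class there is no explicit metric from which to read off the behaviour of $\xi^a$, so the delicate point is to show rigorously that the Coulombic $r^{-3}$ term of $\psi$ dominates and that the Sen derivative producing $\xi^a$ generates neither linear growth nor logarithmic corrections — either of which would correspond to a non-zero $\Lambda_{ij}$ and break the argument. This is exactly where the positivity of $m_{ADM}$, the order $k\geq 4$ fall-off, and hypothesis (i) must be used to pin down the leading asymptotics of $\kappa_{AB}$, and hence of $\xi^a$.
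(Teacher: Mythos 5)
Your proposal follows essentially the same route as the paper's proof of Theorem \ref{Thm:MoreGeneral}: Theorem \ref{Coro:TypeDInvariant} plus hypothesis (i) to get a type D development, hypothesis (ii) to get the globally-defined Killing spinor $\kappa_{AB}=\Psi^{-1/3}o_{(A}\iota_{B)}$ and complex Killing vector $\xi^{AA'}$, the Beig--Chru\'sciel propositions to force $\xi^a$ to be asymptotically translational with limit proportional to $p^a$, a constant complex rescaling to make it real, and Theorem \ref{Thm:JuanThomas} to conclude. Your handling of the reality issue ---splitting $\xi^a=\xi_R^a+i\xi_I^a$ into two real KID sets and annihilating the combination $b\,\xi_R^a-a\,\xi_I^a$ via case (ii) of Proposition \ref{KVasympt-Lambda}--- is the same argument as the paper's, which first rescales by $(c_1+ic_2)^{-1}$ and then kills the imaginary part; if anything, you are more explicit about why propositions stated for real KID sets apply to a complex Killing vector.

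The one genuine gap is exactly the step you flag as the ``main obstacle'': you assert, but do not prove, that the curvature is Coulombic, i.e.\ that $\Psi_{ij}$ decays like $r^{-3}$ and no faster, so that $\Psi\sim r^{-3}$, $\kappa_{AB}\sim r$, and $\xi^a\sim r^0$ is bounded and non-trivial. The paper closes this with a short argument that your proposal is missing and which uses the positive mass hypothesis in a very concrete way: the asymptotically-Euclidean fall-off already gives $\Psi_{ij}=\mathcal{O}(r^{-3})$, and the decay cannot be strictly faster because the ADM energy of an end is recovered from the electric part of the Weyl tensor through
\[ E = -\frac{1}{8\pi G}\lim_{r_0\rightarrow \infty}\oint_{r=r_0} r\, n^i n^j E_{ij}\, dS \]
(see \cite{Ashetkar84}); if one had $E_{ij}=o(r^{-3})$ this limit would vanish, giving $E=0$ and contradicting $m_{ADM}>0$ (which forces $E>0$). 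So the positivity of the mass enters the asymptotic analysis precisely here ---to rule out anomalously fast decay of $\Psi_{ij}$--- rather than through any refined control of the Sen derivative; once $\Psi_{ij}\sim r^{-3}$ is secured, the boundedness and non-triviality of $\xi^a$ follow (the dyad components of $\xi_{AA'}$ are, up to signs, $\tfrac12\Psi^{-4/3}$ times those of $\nabla_{AA'}\Psi$; cf.\ Appendix \ref{Sec:Killingvectorexpression}), and Proposition \ref{KVasympt-Lambda} excludes both the linear growth and the sub-translational degeneracy, exactly as you describe. With that two-line ADM argument inserted, your proof is complete and coincides with the paper's.
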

\begin{proof}
    The ``if" statement is immediate. Conversely, assumption (i) and
    $\mathcal{I}(\mathcal{S}, \bm h, \bm K)=0$ imply that the local
    spacetime development is type D, by Theorem
    \ref{Coro:TypeDInvariant}.  Hence, $\Psi_{ABCD}=\Psi
    o_{(A}o_{B}\iota_{C}\iota_{D)}$ for some
    $\Psi:\mathcal{S}\rightarrow\mathbb{C}$, and $\Psi=\psi$.
    Assumption (ii) then implies that
    $\kappa_{AB}=\Psi^{-1/3}o_{(A}\iota_{B)}$ is a well-defined,
    smooth Killing spinor, resulting in a smooth Killing vector
    $\xi^{AA'}=\nabla^{BA'}\kappa_{B}{}^A$.

    \medskip

As discussed above, the asymptotically-Euclidean conditions imply that
$\Psi_{ij}=\mathcal{O}(r^{-3})$. Note that $\Psi_{ij}$ falls off no
faster than $r^{-3}$; that is to say that it cannot be the case that
$\Psi_{ij}=o(r^{-3})$. To see this, first recall the following
expression for the ADM energy, \cite{Ashetkar84}:
\[ E = -\frac{1}{8\pi G}\lim_{r_0\rightarrow \infty}\oint_{r=r_0}rn^in^jE_{ij}~dS,\]
where $n^i$ is the unit normal to the $r=\textit{const.}$ sphere in
$\mathcal{S}$. Now suppose that $\Psi_{ij}=o(r^{-3})$, then
$E_{ij}=o(r^{-3})$ and it would follow that $E=0$. Now, by the
Positive Energy Theorem ---see Theorem 4.1 of \cite{BeigChr96}, for
example--- we conclude that $E > 0$, since equality would imply the
manifold is flat, contradicting assumption (i). Hence, we arrive at a
contradiction. Therefore, $\Psi_{ij}\sim r^{-3}$ implying that $\Psi\sim r^{-3}$,
$\kappa_{AB}\sim r$, and $\xi^{a}\sim r^0$. By Proposition
\ref{KVasympt-Lambda}, $\xi^a$ is asymptotically translational:
\[
\xi^a =\mu^a + i\nu^a \sim A^a,
\]
where $\mu^a= \Re(\xi^a)$ and $\nu^a = \Im(\xi^a)$ and $A^a$ are
complex constants.  Moreover, using Proposition \ref{KVAsymptotics} we
have that
\[
\Re{(A^a)}=c_1 p^a, \quad \Im(A^a)=c_2 p^a,
\]
where $c_1$ and $c_2$ are real constants, at least one of which is
non-zero. Now consider the Killing spinor
\[\pi_{AB} = (c_1 + ic_2)^{-1}\kappa_{AB},\]
with associated Killing vector
\[\eta^{AA'}=\nabla^{BA'}\pi_{B}{}^A = (c_1 + ic_2)^{-1}\xi^{AA'}.\]
It is clear that $\eta^a \propto p^a$ at infinity, and is therefore
asymptotically translational. Together, Theorems 4.1 and 4.2 of
\cite{BeigChr96} imply that $p^a$, and hence $\eta^a$, are timelike;
equivalently that the ADM mass is positive $m_{ADM}>0$. Also, $\eta^a$
is real-valued since its imaginary part (which is also a Killing
vector) falls off to zero at infinity and therefore is trivial by
Proposition \ref{KVasympt-Lambda}.  Again noting that the Komar mass
of $\xi^a$ coincides with $m_{ADM}$, \cite{Ashetkar84}, which is
positive by the above argument, the conclusion then follows by
application of Theorem \ref{Thm:JuanThomas}.
\end{proof}

\section{Conclusion}

We have identified simple conditions (cf. Proposition
\ref{Prop:Characterisation}) for an initial dataset $(\mathcal{S},\bm
h, \bm K)$ to give rise to a spacetime development $(\mathcal{M},\bm
g)$ that is of Petrov type D.  We call this type of data
\emph{propagating-type-D initial data} to distinguish it from initial
data which is only type D on $\mathcal{S}$, as
$\bm{\mathcal{H}}|_{\mathcal{S}}=0$ is necessary but not sufficient to
ensure the propagation of the Petrov type off $\mathcal{S}$. Using the
Killing spinor initial data equations and the Gauss constraint (the
constraint part of the Bianchi identities), it was shown that
sufficiency is obtained by requiring that certain
connection-coefficients vanish. Together, the necessary and sufficient
conditions were realised covariantly through the vanishing of a cubic
concomitant of the initial data for the Weyl spinor
$\bm{\mathcal{{H}}}|_{\mathcal{S}} $ and its time derivative
$\bm{\mathcal{\dot{H}}}|_{\mathcal{S}} $ which can be computed
directly from the initial data.  ---cf. Theorem
\ref{thm:PetrovTypeD-detect}.

\medskip

This analysis was used to define a positive semidefinite integral
curvature invariant $\mathcal{I}(\mathcal{S}, \bm h, \bm K)$, equation
\eqref{InvariantDef}, that vanishes if and only if the initial data is
propagating-type-D initial data. Hence, this invariant quantifies, at
the level of initial data, deviation from type D of the resulting
(local) spacetime development.  Finally, it was shown that, when
restricted to a class of initial data satisfying certain topological
and asymptotic conditions, the invariant vanishes if and only if the
data is locally isometric to a hypersurface of the Kerr spacetime
---cf. Theorem \ref{Thm:MoreGeneral}. This class of initial data
includes, but is not limited to, \emph{boosted
asymptotically-Schwarzschildean} data sets.  In contrast with other
notions of ``non-Kerrness" based on the Killing spinor initial data
equations, a major feature of the invariant obtained in this paper is
that it is \emph{algebraic} in the sense that its construction does
not require solving any PDE ---a solution to the ``approximate Killing
spinor" equation--- on $\mathcal{S}$ but it is rather constructed
directly from the initial data. The price to pay for this, however, is
the extra assumption that $\psi:=-6J/I$ admit a globally-defined cube
root ---see (ii) of Theorem \ref{Thm:MoreGeneral}.

\medskip

Additionally, we have provided the tensorial, as well as spinorial,
expressions for the invariant ---see equations
\eqref{eq:HInTensors}--\eqref{eq:HDotInTensors}.  That the invariant
is algebraically computable in tensorial form makes it particularly
suitable for monitoring deviations from the Kerr spacetime in the
evolution of initial data sets in Numerical Relativity. Say, for
instance in the numerical evolution of compact binaries. We also gave
an alternative invariant, equation \eqref{InvariantAlternative}, which
corresponds to the $L^2-$norms of $\bm{\mathcal{{\dot{H}}}}$ and
$\bm{\mathcal{{H}}}$.  Further work would involve studying the
evolution of the invariant under the Einstein field equations, and, on
a related note, relaxation of the regularity assumptions imposed on
the initial data.

\subsection*{Acknowledgements}
We would like to thank J. A. Valiente Kroon, D. Hilditch and
T. B\"{a}ckdahl for helpful discussions.

\appendix

\section{Normal derivative operators}\label{Ap:NormDer}

In this short appendix, we detail a calculation that allows us to
identify the tensor equivalent of the operators $D_N$ and
$\mathcal{P}$, as given in section \ref{Sec:NotationAndConventions}.
First notice that, from the definition of $D_N$ in equation
\eqref{eq:DNSpinor}, for a symmetric valence-2 spinor $\nu_{AB}$ one
has
\begin{equation}
    D_{N}\nu_{AB}=\mathcal{P}\nu_{AB} -
    A_{(A}{}^C\nu_{B)C}.\label{eq:DNvalence2Spinor}
\end{equation}
Using equation \eqref{3dimVolume} a short calculation shows that
\[A_{(A}{}^{C}\nu_{B)C}= \frac{i\sqrt{2}}{2} \epsilon_{ABCDEF}A^{CD}\nu^{EF}.\] Using this fact and that $a_i=-\frac{1}{\sqrt{2}}\sigma_{i}{}^{AB}A_{AB}$, where $\sigma_{i}{}^{AB}$ are the spatial Infeld-van-der Waerden symbols,
one obtains equation \eqref{eq:PtoDNVectors} as the tensorial
counterpart of equation \eqref{eq:DNvalence2Spinor}.

\medskip

Now, let $\nu_a$ satisfy $N^a\nu_a=0$, and
$\varphi:\mathcal{S}\hookrightarrow\mathcal{M}$ so that $h_{i}{}^a$
denotes the projector: $\varphi^*(\nu)_{i}=h_{i}{}^a\nu_a$.  Then, in
space-spinors, the projected normal derivative of a covector reads
\begin{align*}
    &\sigma{}^{i}_{AB} \; h_i{}^bN^c\nabla_c \nu_b\\ &=
  \sqrt{2}N_{(A}{}^{A'}\mathcal{P}\nu_{B)A'} \\ &=
  2N_{(A}{}^{A'}\mathcal{P}(N^D{}_{\vert A'\vert}\nu_{B)D})\\ &=
  2N_{(A}{}^{A'}N^D{}_{\vert A'\vert }\mathcal{P}\nu_{B)C} +
  2\nu_{C(A}N_{B)}{}^{A'}\mathcal{P}N^C{}_{A'} \\ &=
  \mathcal{P}\nu_{AB} - A_{(A}{}^C\nu_{B)C} \\ &= D_{N}\nu_{AB}.
\end{align*}
Translating into tensors, we then arrive at equation
\eqref{eq:DNForTensors}.

\section{Asymptotic expansions of the Weyl tensor}
\label{Sec:Asymptotics}

Here we give some asymptotic expansions for the Weyl tensor and its
derivatives, relevant for section \ref{Sec:boostedAS}.

\medskip

Using the Gauss-Codazzi equations \eqref{GCM1}--\eqref{GCM2} one can
express the electric and magnetic parts of the initial data for the
Weyl curvature; a long but direct calculation shows that for data of
the form \eqref{HuangSolh}--\eqref{HuangSolK} one has:
\[ E_{ij} = \mathcal{E}_{ij} + \mathcal{O}_1(r^{-3-q}), \quad B_{ij} = \mathcal{B}_{ij} + \mathcal{O}_1(r^{-3-q}),\]
where
\begin{align}
    & \mathcal{E}_{ij}dx^i dx^j \nonumber \\ &\quad =
  -\tfrac{1}{2r^3}(4A + \partial^2_\theta\alpha +\cot\theta
  \partial_\theta\alpha+\csc^2\theta ~\partial^2_{\varphi}\alpha +
  2\alpha) dr^2 \nonumber \\ &\quad +\tfrac{1}{2r}(2A + \alpha +
  \cot\theta
  \partial_{\theta}\alpha+\csc^2\theta~\partial_{\varphi}^2\alpha)d\theta^2\nonumber
  \\ &\quad +\tfrac{1}{2r}\sin^2\theta (2A + \alpha +
  \partial_{\theta}^2\alpha)d\varphi^2 \nonumber \\ &\quad -
  \tfrac{1}{r}(\partial_{\theta} - \cot\theta)\partial_{\varphi}\alpha
  d\theta d\varphi, \label{AsympE}\\ & \mathcal{B}_{ij}dx^i dx^j
  \nonumber \\ &\quad = \tfrac{2}{r^2}\csc\theta~\partial_{\varphi}
  \beta dr d\theta -\tfrac{2}{r^2}\sin\theta~\partial_{\theta} \beta
  dr d\varphi.\label{AsympB}
\end{align}
in terms of the standard spherical coordinates $(r, \theta, \varphi)$,
related to $(x_1,x_2,x_3)$ by
\[ x_1=r \cos\theta \sin\varphi, \quad x_2=r \sin\theta \sin\varphi, \\
\quad x_3 = r\cos\varphi.\] Additionally, it is easily shown that
    \begin{align}
\underline{\underline{\text{rot}_2(\mathcal{B})}} &=
\left(\begin{array}{ccc} * & -\tfrac{2}{r\sin\theta}
  \mathcal{B}_{r\varphi} &
  \tfrac{2\sin\theta}{r}\mathcal{B}_{r\theta}\\ -\tfrac{2}{r\sin\theta}
  \mathcal{B}_{r\varphi} & * &
  *\\ \tfrac{2\sin\theta}{r}\mathcal{B}_{r\theta} & * & *
    \end{array} \right), \label{eq:RotB}\\
    \underline{\underline{\text{rot}_2(\mathcal{E})}} &=
    \nonumber\\ &\hspace{-10mm}\left(\begin{array}{ccc} 0 & * & * \\ *
      & \tfrac{2}{r\sin\theta}\mathcal{E}_{\theta\varphi} &
      \tfrac{1}{r\sin\theta}\mathcal{E}_{\varphi\varphi} -
      \tfrac{\sin\theta}{r}\mathcal{E}_{\theta\theta} \\ * &
      \tfrac{1}{r\sin\theta}\mathcal{E}_{\varphi\varphi} -
      \tfrac{\sin\theta}{r}\mathcal{E}_{\theta\theta} &
      -\tfrac{2\sin\theta}{r}\mathcal{E}_{\theta\varphi}
    \end{array} \right),   \label{eq:RotE}
\end{align}
in the $dr, d\theta, d\varphi$ co-basis, where the entries denoted $*$
are omitted as they not needed for the purposes of this paper.

\section{The canonical Killing vector}\label{Sec:Killingvectorexpression}
In this appendix, we show that the Killing vector in type D spacetimes
singled out by Ferrando-Saez in \cite{FerSae09}, in fact coincides
with the canonical Killing vector
$\xi^{AA'}:=\nabla^{BA'}\kappa_{B}{}^A$ where $\kappa_{AB}$ is the
Killing spinor.  To do so, one starts by substituting
$\Psi_{ABCD}=\Psi o_{(A}o_{B}\iota_{C}\iota_{D)}$ into the second
Bianchi identity \eqref{Bianchi} which gives
    \begin{flalign}
        & o^{A} o^{B} \bar{o}^{A'} \nabla_{BA'}o_{A} =
      0, \label{BianchiFirst}\\ & \iota^{A} \iota^{B} \bar{o}^{A'}
      \nabla_{BA'}\iota_{A} = 0, \\ & \bar{\iota}^{A'} o^{A} o^{B}
      \nabla_{BA'}o_{A} = 0, \\ & \iota^{A} \iota^{B} \bar{\iota}^{A'}
      \nabla_{BA'}\iota_{A} = 0,\\ & \iota^{A} o^{B} \bar{o}^{A'}
      \nabla_{AA'}o_{B} = \frac{o^{A} \bar{o}^{A'} \nabla_{AA'}\Psi}{3
        \Psi}, \\ & \iota^{A} o^{B} \bar{o}^{A'} \nabla_{BA'}\iota_{A}
      = - \frac{\iota^{A} \bar{o}^{A'} \nabla_{AA'}\Psi}{3 \Psi}, \\ &
      \iota^{A} \bar{\iota}^{A'} o^{B} \nabla_{AA'}o_{B} =
      \frac{\bar{\iota}^{A'} o^{A} \nabla_{AA'}\Psi}{3 \Psi}, \\ &
      \iota^{A} \bar{\iota}^{A'} o^{B} \nabla_{BA'}\iota_{A} = -
      \frac{\iota^{A} \bar{\iota}^{A'} \nabla_{AA'}\Psi}{3
        \Psi}. \label{BianchiLast}
    \end{flalign}
Defining $\mathcal{U}_{AA'BB'} = \epsilon_{A'B'} o_{(A}\iota_{B)}$, it
is straightforward to show by expanding in spin dyad components that
\[\mathcal{Q}_{AA'BB'CC'DD'}=\Psi \mathcal{U}_{AA'BB'}\mathcal{U}_{CC'DD'}.\]
Let $\xi_{AA'} = \nabla^B{}_{A'}\kappa_{AB}$. Substituting
$\kappa_{AB}=\Psi^{-1/3}o_{(A}\iota_{B)}$, along with the identities
\eqref{BianchiFirst}--\eqref{BianchiLast}, gives
\begin{align*}
     o^{A} \bar{o}^{A'}\xi_{AA'} &= - \frac{o^{A} \bar{o}^{A'}
       \nabla_{AA'}\Psi}{2
       \Psi^{4/3}}\\ &=\tfrac{3}{2}\Psi^{-1/3}o^{A}\bar{o}^{A'}(\nabla^{BB'}\mathcal{U}_{BB'AA'}),
     \\ o^A\bar{\iota}^{A'} \xi_{AA'} &= -\frac{\bar{\iota}^{A'} o^{A}
       \nabla_{AA'}\Psi}{2
       \Psi^{4/3}}\\ &=\tfrac{3}{2}\Psi^{-1/3}o^{A}\bar{\iota}^{A'}(\nabla^{BB'}\mathcal{U}_{BB'AA'}),\\ \iota^{A}\bar{o}^{A'}
     \xi_{AA'} &= \frac{\iota^{A} \bar{o}^{A'} \nabla_{AA'}\Psi}{2
       \Psi^{4/3}}\\ &=\tfrac{3}{2}\Psi^{-1/3}\iota^{A}\bar{o}^{A'}(\nabla^{BB'}\mathcal{U}_{BB'AA'}),\\ \iota^{A}
     \bar{\iota}^{A'} \xi_{AA'} &= \frac{\iota^{A} \bar{\iota}^{A'}
       \nabla_{AA'}\Psi}{2
       \Psi^{4/3}}\\ &=\tfrac{3}{2}\Psi^{-1/3}\iota^{A}\bar{\iota}^{A'}(\nabla^{BB'}\mathcal{U}_{BB'AA'}).
\end{align*}
It follows that
\[ \xi_{AA'} = \tfrac{3}{2}\Psi^{-1/3}\nabla^{BB'}\mathcal{U}_{BB'AA'}.\]
Contracting equation \eqref{Qdef} with $N^c$ and performing a $3+1$
decomposition, we get
\begin{align}
    N^cQ_{ca} = & \frac{\psi\dot{\psi} D_{a}\psi - 6
    \dot{\psi} \psi_{ac}D^{c}\psi + 6i N^b
    \epsilon_{badf}\psi_{c}{}^{f} (D^{d}\psi)(D^{c}\psi)}{12
    \psi^{11/3}}\nonumber \\ &+ \frac{(- \psi (D^b\psi)(D_b\psi) + 6
    \psi_{bc}(D^{b}\psi)(D^{c}\psi))}{12
    \psi^{11/3}}N_a,\label{NormalTimesQ}
\end{align}
where
\begin{align} 
\dot{\psi} &:= N^a\nabla_a\psi = N^a\nabla_a(-6J/I) = 6I^{-2}(J\dot{I}
- I\dot{J}).
\end{align}

\section{Wave equations in boosted regions}\label{Sec:Boost}

\noindent In this Appendix we collect and adapt some of the results of \cite{ChrMur81} that are used in main text of this article.
Let $U \subset \mathbb{R}^4$ be an open set and define $\sigma(x)$ as
\[
\sigma(x):=(1+|x|^2)^{1/2}
\] 
where $x\in \mathbb{R}^4$. With this notation at hand, one defines a weighted Sobolev space $H_{s,\delta}(U)$ with $s \in \mathbb{N}$, $\delta \in \mathbb{R}$ as the Hilbert space of vector valued functions with inner product:
\begin{align*}
    \langle \bm u_1, \bm u_2 \rangle_{H_{s,d}(U)}:=\sum_{|\alpha| \leq s}\langle \sigma^{\delta +|\alpha|}D^\alpha \bm u_1, \sigma^{\delta +|\alpha|}D^\alpha \bm u_2 \rangle_{L^2(U)},
\end{align*}
where $\langle \cdot, \cdot \rangle_{L^{2}}(U)$ denotes the standard $L^2$-inner product on $U$. The associated norm is naturally given by
\begin{align}\label{def_Weighted_Sobolev}
    || \bm u ||^2_{H_{s,\delta}(U)}:=\langle \bm u, \bm u \rangle_{H_{s,\delta}(U)}.
\end{align}
One defines a \emph{boost-region} as
\begin{align}
    \Omega_{\vartheta} := \{ x \in \mathbb{R}^4 \;\; |\;\; \frac{|t|}{\bar{\sigma}(\bar{x})}<\vartheta\},
\end{align}
 where $x=(x^0,\bar{x})=(t,x^i)$ are asymptotically Cartesian coordinates, and one defines as usual $r^2 = \delta_{ij}x^ix^j$ so that
  $\sigma(\bar{x}):=(1+r^2)^{1/2}$ and $0<\vartheta<1$ is a constant.
  \\
 The function
  \[
  \tau(x)=\frac{t}{\bar{\sigma}(\bar{x})}
  \]
  induces a foliation on $\Omega_{\vartheta}$ given by
  \[\Omega_{\vartheta}= \bigcup_{\tau \in (-\vartheta,\vartheta)}\mathcal{S}_{\tau}\]
  By convention in this article $\mathcal{S}_0=\mathcal{S}$.
  The relation between the Euclidean measures on $(-\vartheta,\vartheta)\times \mathbb{R}^3$ and $\Omega_{\vartheta}$, $dx=\bar{\sigma}(\bar{x})d\tau d\bar{x}$ gives a relation between the weighted Sobolev spaces on $\Omega_{\vartheta}$ and on the leaves of the foliation $\mathcal{S}_{\tau}$  ---see \cite{ChrMur81} for a detailed discussion.
    To simplify the notation, the boost region $\Omega_{\vartheta}$ will be denoted by $\Omega$.
\\
\noindent Now, consider second order differential operators:
\begin{align}
L \bm u=\sum_{\alpha=0} ^{2} \bm a_\alpha \cdot D^\alpha \bm u
\end{align}
where $\bm u$ and $L \bm u$ are $\mathbb{R}^4$-vector valued functions in $\Omega$.
$L$ will be said to satisfy the weak coupling and hyperbolicity assumption (hypothesis I in \cite{ChrMur81}) if
\begin{align}\label{hypothesisI}
    \bm a_2 = \bm g^\sharp\cdot Id 
\end{align}
where $\bm g$ is a regular Lorentzian metric in $\Omega$ and $Id$ is the identity. 
Additionally, we will say that
$L$ satisfies the regularity assumption (hypothesis II in \cite{ChrMur81}) if for $\alpha=0,1,2$ there are non-negative integers and real numbers $(s_\alpha, \delta_\alpha)$ such that  \begin{equation}\label{hypothesisII}
    \left\{
\begin{array}{ll} 
         \bm a_\alpha \in H_{s_\alpha,\delta_\alpha} (\Omega) \qquad \text{for}\;\; \alpha=0,1 \\
         \bm g-\bm \eta \in H_{s_2,\delta_2}(\Omega)
\end{array} \right.
\end{equation}
where 
 \begin{align}\label{s-delta-conds}
 s_\alpha> 1+\alpha, \quad \delta_\alpha> -\alpha . \end{align}
  For later use denote $s':=\min_{0 \leq \alpha\leq 2}\{s_\alpha\}+1$

  \medskip
  
With these definitions, Theorems 4.1 and 4.2 of \cite{ChrMur81}
combined can be phrased as follows:
 
\begin{proposition}(Christodoulou \& O'Murchadha, \cite{ChrMur81})\label{prop:ChristOMurch}
Let $L$ be a second-order operator satisfying
\eqref{hypothesisI}--\eqref{hypothesisII}. Let $\mathcal{S}$ be the
initial hypersurface with normal vector $\bm N$ and let $\Omega
\subset {D}^{+}(\mathcal{\mathcal{S}}) $ be a boost-type domain. If
\begin{align}
    \bm \beta \in H_{\lambda-1,\delta+2}(\Omega) \\ \bm \phi \in
    H_{\lambda, \delta +1/2} (\mathcal{S}), \\ \bm \psi \in
    H_{\lambda-1,\delta +3/2}(\mathcal{S})
\end{align}
with $2 \leq \lambda \leq s'$ and $\delta \in \mathbb{R}$, then the
Cauchy problem
       \begin{equation}
    \left\{
\begin{array}{ll}
	 L \bm u = \bm \beta & \qquad
         \text{on}~\mathcal{D}^{+}(\mathcal{\mathcal{S}}),\\ \bm u=\bm
         \phi, & \qquad \text{on}~\mathcal{\mathcal{S}},\\ \nabla_N
         \bm u= \bm \psi & \qquad\text{on}~\mathcal{\mathcal{S}}.
\end{array} \right.
\end{equation}
 has a unique solution
\begin{equation}
    \bm u \in H_{2,\delta}(\Omega)
\end{equation}
\end{proposition}

Similarly, the boost theorem (Theorem 6.1 in \cite{ChrMur81}) can be
phrased minimally as follows:
\begin{theorem}(Christodoulou \& O'Murchadha)\label{Thm:boost_theorem}
    Let $(\bm h, \bm K, \mathcal{S})$ be an initial data set for the
    Einstein field equations, with
    \begin{align}
        \bm h - \bm e \in H_{s, \delta+1/2}(\mathcal{S}), \qquad \bm K
        \in H_{s-1, \delta + 3/2}(\mathcal{S})
    \end{align}
    where $\bm e$ is the Euclidean metric and $s\geq 4$ and $\delta
    >-2$. Then, there exists a solution to the Einstein field
    equations $(\bm g, \Omega)$ in some boost type domain $\Omega$
    such that
    \begin{align}
        \bm g - \bm \eta \in H_{s,\delta}(\Omega)
    \end{align}
    with $(\bm h, \bm K)$ being the first and second fundamental forms
    of $\mathcal{S}$ relative to $\bm g$.
\end{theorem}
\medskip

We want to use the above results in the simpler set-up of this
article. Namely, for functions which are smooth and have certain decay
at infinity.  Using the fact that the volume-form in Euclidean space
$(\bm e, \mathbb{R}^3)$ expressed in spherical polar coordinates is
given by $dV = r^2 \sin^2 \theta dr \wedge d \theta \wedge d\phi$,
direct inspection of definitions \eqref{def_Weighted_Sobolev} shows
that:
\begin{align}\label{eq_keyInclusion}
    C^\infty(\mathcal{S})\cap \mathcal{O}_{s}(r^{-\delta'-3/2})\subset
    H_{s,\tilde{\delta}}(\mathcal{S}).
\end{align}
for any $\tilde{\delta} < \delta'$.

\medskip

Now, if we assume $(\bm h, \bm K, \mathcal{S})$ is asymptotically Euclidean of order $(k, q)$. Using that we are working in smooth category, along with  the inclusion \eqref{eq_keyInclusion} gives
\[\bm h -\bm e \in H_{k,\tilde{q}-3/2}(\mathcal{S}), \qquad
\bm K \in H_{k-1,\tilde{q}-1/2}(\mathcal{S}).\]
for any $\tilde{q}<q$.
Relabelling, to more easily apply the boost theorem, one takes
$\delta <  q-2$ and concludes, using Theorem \ref{Thm:boost_theorem}, that 
\begin{eqnarray}\label{metric_and_ders_decayBoostdomain}
    \bm g - \bm \eta\in H_{k, \delta}(\Omega) \\ \bm\Gamma \in H_{k-1,\delta+1} (\Omega) \\
    \bm\partial\bm\Gamma \in H_{k-2,\delta+2}(\Omega) .
\end{eqnarray}
As discussed in Remark \ref{remark:kid_translational}, the Killing vector candidate satisfies the IVP
\begin{equation}\label{CauchyProbForX}
    \left\{
\begin{array}{ll}
	 \square X^a = 0 & \qquad
         \text{on}~\mathcal{D}^{+}(\mathcal{U}),\\ 
         X^a =(N, Y^i)
         &\qquad\text{on}~\mathcal{U},\\ 
         N^b\nabla_b X^a = -D^a N + K_{b}{}^aY^b
         &\qquad\text{on}~\mathcal{U}.
\end{array} \right.
\end{equation}
where $\mathcal{U}\subset\mathcal{S}$ is an open set.
Let $Z^a = X^a-A^a$ where $A^a$ are constants and
\begin{eqnarray*}
    Z^a \in \mathcal{O}_k(r^{-q}), \qquad \nabla_N Z^a \in \mathcal{O}_{k-1}(r^{-1-q})
  \quad  \text{on}\quad  \mathcal{U}
\end{eqnarray*}
Recalling we are working in the smooth category and using again the inclusion
\eqref{eq_keyInclusion} gives
\begin{eqnarray*}
    Z^a \in H_{k,\delta+1/2}(\mathcal{U}), \quad \nabla_N Z^a \in H_{k-1,\delta +3/2}(\mathcal{U})
\end{eqnarray*}
for $\delta < q-2$. Additionally, the following calculation
\begin{align*}
    \square Z^c &=-\square A^c \\
    &= -g^{ab}\partial_{a}\partial_{b}A^c + [\bm g^{\sharp}\bm \cdot \bm \Gamma \cdot \partial A]^c + [\bm g ^\sharp \cdot (\bm\Gamma^2 +\partial \bm\Gamma) \cdot A]^c\\
    &= [\bm g ^\sharp \cdot (\bm\Gamma^2 +\partial \bm\Gamma) \cdot A]^c 
\end{align*}
shows that $\square Z^c =-\square A^c \in H_{k-2,\delta +2}( \Omega)$. Rewriting these results with $\lambda = k-1$ gives:
\begin{eqnarray}
\square Z^c \in H_{\lambda -1,\delta +2}(\Omega)\\
Z^a \in H_{\lambda+1,\delta+1/2}(\mathcal{S})\subset H_{\lambda,\delta+1/2}(\mathcal{S})\\
\nabla_N Z^a \in H_{\lambda,\delta+3/2}(\mathcal{S})\subset H_{\lambda-1,\delta+3/2}(\mathcal{S}),
\end{eqnarray}
where the expressions have been arranged so that they match the assumptions of Proposition
\ref{prop:ChristOMurch}. Then, to apply Proposition \ref{prop:ChristOMurch} it only remains to verify that conditions \eqref{hypothesisI} and \eqref{hypothesisII} hold. To do so, we simply notice that as long as $\bm g$ 
is Lorentzian, condition \eqref{hypothesisI} holds and condition \eqref{hypothesisII} is verified from equations \label{metric_and_ders_decayBoostdomain}
by taking $k>3$ and $\delta>-2$. This ensures that IVP implied by  \eqref{CauchyProbForX} has unique solution 
\begin{equation}
    Z^a \in H_{2,\delta}(\Omega).
\end{equation}
 This means, in terms of the original labels $(k,q)$  that we need $k>3$ and $q>0$. Finally, as a consequence of the Sobolev embedding ---see Theorem 2.1 of \cite{ChrMur81}, for example--- we have the following inclusion
\begin{align}\label{otherInclusion}
    H_{s,\delta}(\Omega) \subset \mathcal{O}_{s-2}(r^{-\delta-3/2}).
\end{align}
In particular, taking $s=2$, we see that $Z^a \rightarrow 0$ as $r\rightarrow\infty$, on $\Omega$.

\bibliographystyle{sn-mathphys-num}

\end{document}